\newtheorem{Theo}{Theorem}
\newtheorem{proposition}{Proposition}
\newtheorem{lemma}[proposition]{Lemma}
\newtheorem{definition}{Definition}
\newtheorem*{rem*}{Remark}
\newcommand{\I}{{\mathbf I}}
\newcommand{\bP}{{\mathbf P}}
\newcommand{\Z}{{\mathbf Z}}
\newcommand{\vv}{{\vec v}}
\newcommand{\vw}{{\vec w}}
\newcommand{\vx}{{\vec x}}
\newcommand{\Aa}{{\mathcal A}}
\newcommand{\Bb}{{\mathcal B}}
\newcommand{\Cc}{{\mathcal C}}
\newcommand{\Ff}{{\mathcal F}}
\newcommand{\Hh}{{\mathcal H}}
\newcommand{\Qq}{{\mathcal Q}}
\newcommand{\Rr}{{\mathcal R}}
\newcommand{\Ss}{{\mathcal S}}
\newcommand{\Uu}{{\mathcal U}}
\newcommand{\Vv}{{\mathcal V}}
\newcommand{\Xx}{{\mathcal X}}
\newcommand{\Ab}{{\mathbb A}}
\newcommand{\CC}{{\mathbb C}}
\newcommand{\EE}{{\mathbb E}}
\newcommand{\GG}{{\mathbb G}}
\newcommand{\KK}{{\mathbb K}}
\newcommand{\NN}{{\mathbb N}}
\newcommand{\RR}{{\mathbb R}}
\newcommand{\TT}{{\mathbb T}}
\newcommand{\ZZ}{{\mathbb Z}}
\newcommand{\Ccc}{{\mathfrak C}}
\newcommand{\Eee}{{\mathfrak E}}
\newcommand{\aaa}{{\mathfrak a}}
\newcommand{\bbb}{{\mathfrak b}}
\newcommand{\ccc}{{\mathfrak c}}
\newcommand{\ddd}{{\mathfrak d}}
\newcommand{\nul}{{\bf 0}}
\newcommand{\qtx}[1]{\quad\text{#1}\quad}
\newcommand{\GL}{{\rm GL}}
\newcommand{\Her}{{\rm Her}}
\newcommand{\pmat}[1]{\begin{pmatrix} #1  \end{pmatrix}}
\newcommand{\smat}[1]{\left( \begin{smallmatrix} #1  \end{smallmatrix} \right)}
\renewcommand{\CC}{\mathds{C}}
\renewcommand{\RR}{\mathds{R}}
\renewcommand{\ZZ}{\mathds{Z}}
\renewcommand{\NN}{\mathds{N}}
\numberwithin{proposition}{section}
\numberwithin{equation}{section}
\title[A.C.  spectrum for operators with random decaying potentials on the strip]{Absolutely continuous spectrum for Schrödinger operators with random decaying matrix potentials on the strip}
\author{Hernán González}
\author{Christian Sadel}
\address{Facultad de Matem\'aitcas, Pontificia Universidad Cat\'olica de Chile} 
\email[González]{higonzal@mat.uc.cl}
\email[Sadel] {chsadel@mat.uc.cl}
\subjclass[2010]{Primary 82B44,   Secondary  60H25, 47B36  }  
\keywords{random decyaing potential, absolutely continuous spectrum, extended states}
\begin{document}

\begin{abstract}
We consider a family of random Schrödinger operators on the discrete strip with decaying random $\ell^2$ matrix potential.
We prove that the spectrum is almost surely pure absolutely continuous,  apart from random possibly embedded eigenvalues, which may accumulate at band edges.
\end{abstract}

\maketitle

\tableofcontents


\section{Model and main result} 


We consider a random family of block-Jacobi operators on $\ell^2(\ZZ_+)\otimes \CC^l$ given by
\begin{equation}\label{eq-def-H}
(H_{\omega} \Psi)_n =-\Psi_{n-1}-\Psi_{n+1}+A\Psi_n+V_n(\omega) \Psi_n
\end{equation}
where   $\Psi=(\Psi_n)_{n\geq 0}\in \ell^2(\ZZ_+)\otimes \CC^l$ means that $\Psi_n \in \CC^l, \,\forall n \in \ZZ_+$, with $\displaystyle \sum_{n\geq 0} ||\Psi_n ||^2 < \infty$,  In the case $n=0$ one sets $\Psi_{-1}=0$ in \eqref{eq-def-H}.
$A$ is a fixed Hermitian $l\times l$  matrix ($A=A^*$) and finally we have  a random Hermitian-matrix potential $V_n=V_n(\omega)$   This means,  we have some probability space $(\Omega,\Aa,\bP)$ and $\Her(l)$ valued random variables $V_n\,:\, \Omega \to \Her(l)$.
Moreover,  we assume that the family $(V_n)_n$ is independent
and that
\begin{equation}\label{eq-cond-V}
 \sum_{n \geq 0} \left( \| \EE(V_n)\| +\EE (||V_n||^2)\;\right)  \, <\, \infty\;,
\end{equation}
where  $\EE$ denotes the expectation value.  

 We also define the 'unperturbed' operator $H_0$ by eliminating the $V_n$,
\begin{equation}
(H_{0} \Psi)_n =-\Psi_{n-1}-\Psi_{n+1}+A\Psi_n\;.
\end{equation}

$H_0$ and $H_\omega$ can be seen as quasi-one dimensional discrete Schrödinger operators on a semi-infinite strip of width $l$.
The matrix $A$ maybe the adjacency matrix of a finite graph $\GG$,  in which case $H_0$ would be like a discrete Laplace operator on the product graph $\ZZ_+ \times \GG$.  $H_\omega$ is then a random perturbation of $H_0$ adding the matrix potentials $V_n$ at each level $n$.  
This way,  $H_\omega$ falls into the class of operators describing randomly perturbed quantum systems.
The study of such systems was initiated by Anderson \cite{Anderson} with the today called Anderson model where one  studies operators on $\ZZ^d$ with independent identically distributed potentials on each lattice site. 
In general for such models one finds Anderson localization at large disorder (large variance of the potential) and at the edges of the spectrum. Anderson localization means one has pure point spectrum and exponentially decaying eigenfunctions.
There are two general methods to prove this, the fractional moment method \cite{AM} and multi-scale analysis \cite{FS, GK1, GK2}.
The fractional moment method at high disorder works fine in any graphs with a finite upper bound on the connectivity of one point \cite{Tau}. 
In $d=1$ dimension (line or strip) one finds localization for the Anderson model at any disorder \cite{GMP, KuS, CKM, KlLS}.

Except in one dimension,  for a long time the high disorder Bernoulli Anderson model could not be handled,  this means the i.i.d.  potential has a Bernoulli distribution.
A first breakthrough was done for the continuous model in \cite{BK},  and recently,  the high-disorder localization has also been shown for the discrete Bernoulli Anderson model in $\ZZ^d$ for $d=2$ and $d=3$ dimensions \cite{Li,LZ}.

From $d\geq 3$ dimensions on,  one expects some absolutely  continuous spectrum at small enough disorder. However, this is still conjectural.
Existence of absolutely continuous spectrum for Anderson models at low disorder has first been proved for infinite dimensional hyperbolic type graphs like regular trees and tree-like structures \cite{Kl, ASW, AW, FHS, FHS2, KS, KLW, Sa-FC, Sa-Fib}.
It has also been shown for the Anderson model on special graphs with a finite-dimensional growth,  so called anti-trees and partial antitrees \cite{Sa-AT, Sadel}.

\vspace{.2cm}

As a mean to study critical transitions from absolutely continuous to pure point spectrum, random decaying potentials in one dimension were also investigated \cite{KiLS,Lasi,FHS3}.  Here,  we extend and improve on the result by Froese,  Hasler and Spitzer \cite{FHS3} using methods similar to Last and Simon \cite{Lasi}.
The key point for the absolutely continuous spectrum result in \cite{Lasi} has been the spectral average formula by Carmona-Lacroix \cite[Theorem II.3.2]{Cala}.  Here, we use its generalization to strips,  Proposition~\ref{th-spec-av-strip} which is a special case of the broader generalization recently done in \cite{Sadel2019}.

\subsection{Spectrum and spectral bands}

Without loss of generality,  we may assume that $A$ in \eqref{eq-def-H} is a diagonal matrix:  If this is not the case, then, as $A^*=A$,   there is a unitary matrix $U$ such that $U^*AU$ is diagonal.  
 Then,  define the unitary operator
 $\Uu:\ell^2(\ZZ_+)\otimes  \CC^l \to \ell^2(\ZZ_+) \otimes  \CC^l $ by
  $(\Uu\Psi)_n:=U\Psi_n $ and one finds:
  $$(\Uu^* H_{\omega} \Uu \Psi)_n =-\Psi_{n+1}-\Psi_{n-1}+U^*AU\Psi_{n}+U^*V_nU\Psi_n$$
 Now $U^*AU$ is diagonal and $U^*V_n U$ are random Hermitian matrices satisfying an inequality as \eqref{eq-cond-V}.
Thus,   using this unitary conjugation, we may assume that $A$ is diagonal,  hence
 \begin{equation} 
 A=\left(\begin{matrix}
\alpha_1 & 0  & \hdots  & 0   \\
0 & \alpha_2  & 0 & 0   \\
\vdots & 0  &\ddots & 0   \\
0 & 0  & 0  &  \alpha_l 
\end{matrix}\right) 
\end{equation}
with $\alpha_j \in \RR$, $j=1,\ldots,l$ being the eigenvalues of $A$.
As a consequence, 
$$
\sigma(H_0)=\bigcup_{j=1}^l [a_j-2,a_j+2]$$ 
and the spectrum  is purely absolutely continuous.

We call $[\alpha_j-2,\alpha_j+2]$ the j-th band of the spectrum of $H_0$,  $\{\alpha_j-2,\alpha_j+2\}$ are the band-edges of this band.  Each band-edge can be internal,  meaning inside of another band, or external, meaning an edge (boundary point) of the spectrum of $H_0$.  We consider the spectrum of $H_0$ without all the (external and internal) band-edges and define
\begin{equation}
\Sigma = \left[\bigcup_{j=1}^l (\alpha_j-2, \alpha_j+2) \right] \setminus \left[\bigcup_{j=1}^l \{\alpha_j-2,\alpha_j+2 \} \right]
\end{equation}

Note $\Sigma$ is open and $\overline{\Sigma}=\sigma(H_0)$.
We also define the intersection of all open bands,
\begin{equation}
\Sigma_0=\bigcap_{j=1}^l (\alpha_j-2, \alpha_j+2)\;
\end{equation}
which might be empty.
For the essential spectrum we note that $H_\omega=H_0+\bigoplus_n V_n$ where $\bigoplus_n V_n$ is almost surely a compact operator, hence,

$$\sigma_{\rm ess}(H_{\omega})=\overline{\Sigma}=\sigma(H_0)$$

\subsection{The main result}

The main theorem of the whole thesis is the following:
\begin{Theo}\label{th:main}
Apart from discrete spectrum,  (embedded isolated eigenvalues)  the spectrum of $H_{\omega}$ is almost surely purely absolutely continuous in $\Sigma$.  Moreover, there are no embedded eigenvalues in the intersection of the bands,  $\Sigma_0$.
That means,  there may be random embedded eigenvalues in $\Sigma \setminus \Sigma_0$ which may only accumulate at the boundary $\partial \Sigma$,  that is, 
the internal and external band-edges.   \\
In technical terms, this means, 
there is a set $\hat \Omega \subset \Omega$ of probability one,  $\bP(\hat \Omega)=1$,  such that for all $\omega \in \hat \Omega$ and all compact subsets
$\Ccc \subset \Sigma$,  there is a finite (random) subset of eigenvalues $\Eee=\Eee(\omega) \subset \Ccc \setminus \Sigma_0$,  such that the spectrum of $H_\omega$ is purely absolutely continuous in $\Ccc \setminus \Eee$.
\end{Theo}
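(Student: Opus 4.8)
The plan is to reduce everything to a quantitative study of the transfer matrix cocycle and then feed the output into the spectral averaging formula, Proposition~\ref{th-spec-av-strip}. Writing $H_\omega\Psi=E\Psi$ as the first‑order system
\begin{equation*}
\begin{pmatrix}\Psi_{n+1}\\ \Psi_n\end{pmatrix}=T_n(E)\begin{pmatrix}\Psi_n\\ \Psi_{n-1}\end{pmatrix},\qquad T_n(E)=\begin{pmatrix}A+V_n-E\,\I & -\I\\ \I & \nul\end{pmatrix}\in\Mat_{2l}(\CC),
\end{equation*}
and setting $\Tt_n(E)=T_n(E)\cdots T_1(E)$, the free cocycle $\Tt^0_n(E)$ (all $V_m=\nul$) is, since $A$ is diagonal, a direct sum of the $l$ scalar transfer matrices $\smat{\alpha_j-E & -1\\ 1 & 0}$. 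For $j$ with $E\in(\alpha_j-2,\alpha_j+2)$ this block is elliptic (conjugate, by a conjugation smooth in $E$, to a rotation by the quasimomentum $k_j(E)\in(0,\pi)$), and for $E\notin[\alpha_j-2,\alpha_j+2]$ it is hyperbolic. Hence, putting $J(E)=\{\,j:E\in(\alpha_j-2,\alpha_j+2)\,\}$, for every $E\in\Sigma$ a conjugation that is locally smooth in $E$ brings $\Tt^0_n(E)$ to the block form $\bigl(\bigoplus_{j\in J(E)}R^{(j)}_n\bigr)\oplus H_n$ with each $R^{(j)}_n$ a product of planar rotations and $H_n$ hyperbolic. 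Since the band edges $\{\alpha_j\pm2\}$ form a finite set, $\Sigma$ is a finite disjoint union of open intervals $I_1,\dots,I_p$ on each of which $J(E)$ is constant, and $\Sigma_0$ is the union of those $I_r$ with $J\equiv\{1,\dots,l\}$.

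The analytic core is a Prüfer/EFGP computation in the elliptic channels. In the rotating coordinates, for $E\in\Sigma$ and $j\in J(E)$ let $\rho_n^{(j)}$ be the squared amplitude of the solution projected onto the $j$‑th elliptic block; the perturbation of size $O(\|V_n\|)$ yields a multiplicative recursion
\begin{equation*}
\rho_n^{(j)}=\rho_{n-1}^{(j)}\bigl(1+2\,\re c_n^{(j)}+r_n^{(j)}\bigr),\qquad |c_n^{(j)}|=O(\|V_n\|),\quad |r_n^{(j)}|=O(\|V_n\|^2),
\end{equation*}
with $c_n^{(j)}$ linear in $V_n$. Taking logarithms, $\log\rho_n^{(j)}$ is $\log\rho_0^{(j)}$ plus a martingale whose increments $2\,\re c_n^{(j)}-\EE(2\,\re c_n^{(j)}\mid V_1,\dots,V_{n-1})$ have conditional second moment $O(\EE\|V_n\|^2)$, plus a remainder bounded by $\mathrm{const}\cdot\sum_m(\|\EE V_m\|+\|V_m\|^2)$, which is almost surely finite by \eqref{eq-cond-V}. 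So for each fixed $E\in\Sigma$, almost surely $\log\rho_n^{(j)}$ converges to a finite limit for all $j\in J(E)$, i.e.\ the restriction of $\Tt_n(E)$ to the elliptic directions, together with its inverse, stays bounded; moreover the Prüfer phases converge (mod the free rotation), giving genuine asymptotics for those solutions. Since $\|V_n\|\to0$ almost surely, the exponential dichotomy of the hyperbolic block is robust and persists for the full cocycle on $[N,\infty)$ for large $N$, so $\Tt_n(E)$ has a well‑defined $(l-\#J(E))$‑dimensional stable subspace $S(E)$. Finally, because $E\mapsto\Tt_n(E)$ is polynomial and the analytic martingales $E\mapsto\log\rho_n^{(j)}(E)$ have increment variances summable locally uniformly in $E$, a Cauchy‑estimate/maximal‑inequality argument upgrades the pointwise a.s.\ convergence to a.s.\ convergence that is uniform on compact subsets of each $I_r$. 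Thus, almost surely, on every compact $K\subset I_r$ the cocycle $\Tt_n(E)$ has, uniformly in $E\in K$, a bounded invertible elliptic/stable part and an exponentially expanding complementary part.

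With this cocycle structure the spectral conclusions follow. Plugging boundedness of the elliptic part of $\Tt_n(E)$ — valid for Lebesgue‑a.e.\ $E\in\Sigma$ by the martingale argument and Fubini, and for every $E$ on compacts of the $I_r$ by the upgrade — into Proposition~\ref{th-spec-av-strip} shows that the spectrum in $\Sigma$ is purely absolutely continuous up to a Lebesgue‑null set, with a.c.\ multiplicity $\#J(E)$ at $E$; in particular $\sigma_{\mathrm{ac}}(H_\omega)\supseteq\overline\Sigma$. For the everywhere statement one uses the asymptotics above to continue the matrix Weyl function $M(z)$ to the real axis: on each $I_r$ the part of $M$ coming from the elliptic channels has bounded, continuous boundary values (no blow‑up), while the part coming from the hyperbolic channels — where $\Tt_n(E)$ has a uniform exponential dichotomy on $K$ — admits a Jost‑type meromorphic continuation across $K$ whose only real singularities are poles. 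Hence on every compact $K\subset I_r$ the spectral measure has no singular continuous part and only finitely many point masses, all lying in $I_r\subset\Sigma\setminus\Sigma_0$ when $J\neq\{1,\dots,l\}$. On an interval $I_r\subset\Sigma_0$ there are no hyperbolic channels, hence no meromorphic part and no poles, so $H_\omega$ has no eigenvalue in $\Sigma_0$. Collecting the finitely many eigenvalues over the finitely many $I_r$ meeting a given compact $\Ccc\subset\Sigma$ produces the random finite set $\Eee(\omega)\subset\Ccc\setminus\Sigma_0$, and pure absolute continuity of $H_\omega$ on $\Ccc\setminus\Eee$ follows.

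The two genuinely hard steps should be, first, the martingale bookkeeping in the elliptic channels — making rigorous, uniformly enough in $E$, the split of the increment of $\log\rho_n^{(j)}$ into a martingale part with $O(\EE\|V_n\|^2)$ conditional variance, an $O(\|\EE V_n\|)$ drift, and an $O(\|V_n\|^2)$ remainder, which is exactly what dictates hypothesis \eqref{eq-cond-V} — and, second, the passage from almost‑everywhere in $E$ to everywhere on compacts of the $I_r$, together with the analyticity of $S(E)$ and the ensuing meromorphic continuation of the hyperbolic part of $M$; this uniform control is precisely what turns ``purely a.c.\ off a Lebesgue‑null set'' into the clean dichotomy between absolutely continuous spectrum and a discrete set of embedded eigenvalues that can accumulate only at the internal and external band edges.
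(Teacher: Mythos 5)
Your skeleton is the right one (transfer matrices, elliptic/hyperbolic channel splitting, spectral averaging via Proposition~\ref{th-spec-av-strip}, exceptional energies coming only from the hyperbolic part), but the two steps you yourself flag as ``genuinely hard'' are precisely where the argument breaks, and the way you propose to do them would not work.

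First, the scalar Pr\"ufer recursion $\rho_n^{(j)}=\rho_{n-1}^{(j)}(1+2\re c_n^{(j)}+r_n^{(j)})$ per elliptic channel ignores that the matrix potential couples the elliptic channels to each other \emph{and} to the hyperbolic ones. A generic solution picks up an exponentially growing hyperbolic component, so ``the squared amplitude of the solution projected onto the $j$-th elliptic block'' is not governed by a closed recursion with $O(\|V_n\|)$ increments. The paper's substitute is the Schur complement $X_n=A_n-B_nD_n^{-1}C_n$ of the full conjugated cocycle (a well-defined function on equivalence classes $\Xx\sim\Xx\smat{I&0\\M&G}$), together with the $n$-dependent choice of initial vector $\vec u_{\lambda,n}$ (Lemma~\ref{lem:u-y}) that aligns the solution with the contracting directions so that $T^\lambda_{0,n}\smat{\vec u_{\lambda,n}\\ \vx}=Q_\lambda\smat{X^\lambda_{m,n}\vec y_{\lambda,n}\\0}$. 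The rank condition \eqref{cond:rank} needed for this choice is verified at $\Im z=\eta>0$ by a Herglotz positivity argument and then fails at only finitely many real energies by analyticity (Lemma~\ref{lem:fullrank}); that is the actual source of the finite exceptional set, not a meromorphic continuation of a Jost function, which you assert but would still have to construct.

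Second, and more seriously: what Proposition~\ref{th-ac-criterion} needs is the integrated bound $\liminf_n\int_a^b\|T^\lambda_{0,n}\smat{\vec u_{\lambda,n}\\ \vx}\|^4\,{\rm d}\lambda<\infty$. Your martingale argument gives, for each fixed $E$, almost sure convergence of $\log\rho_n(E)$, hence by Fubini a.s.\ finiteness of $\liminf_n\|\cdot\|$ for Lebesgue-a.e.\ $E$; but Fatou goes the wrong way to convert that into finiteness of $\liminf_n\int\|\cdot\|^4\,{\rm d}\lambda$. The proposed fix --- upgrading to a.s.\ convergence \emph{uniform} on compacts in $E$ by a ``Cauchy-estimate/maximal-inequality'' --- is not justified and is too strong: uniform boundedness of the elliptic part for \emph{every} $E$ in a compact set is not what one expects for $\ell^2$-decaying random potentials, and it is not needed. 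The paper instead proves the uniform moment bound $\sup_n\EE(\|X^\lambda_{m,n}\|^4)\le\Cc$ (Proposition~\ref{prop-boundX}), where the whole point is extracting a \emph{fourth}-moment bound from only the \emph{second}-moment hypothesis \eqref{eq-cond-V} by conditioning on $\Xx_n$ and exploiting $Z_{n+1}=SZ_n\Gamma^{-1}+M_n$; then Fatou--Fubini (Proposition~\ref{prop-est-schur}) gives the integrated bound a.s., and Banach--Alaoglu in $L^2(a,b)$ yields purity of the a.c.\ spectrum. This moment computation, which is the technical heart of the paper and the reason the hypothesis can be weakened from the $\sum\EE\|V_n\|^4<\infty$ of \cite{FHS3} to \eqref{eq-cond-V}, is entirely absent from your proposal.
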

Under the slightly stronger assumption that $\EE(V_n)=0$ and $\sum_n \EE(\|V_n\|^2+\|V_n\|^4)<\infty$ it was aready shown in \cite{FHS3} 
that the spectrum is purely absolutely continuous in $\Sigma_0$, and that there is absolutely continuous spectrum in all of $\Sigma$.
However,  the proof method used there for the set $\Sigma$ does not exclude any other type of singular spectrum.

Note that in the line case, $l=1$ we have $\Sigma=\Sigma_0$ and purely absolutely continuous spectrum in this case has already been shown in
\cite{Lasi}. On the line case it is also known that for any,  also non-random $\ell^2$-potential,  one has absolutely continuous spectrum in $\Sigma$ \cite{DK}, but again, any other type of embedded singular spectrum is possible (not excluded in the proof).

The general operator $H_\omega$ investigated here allows the case,  were the operator (almost surely) splits into the direct sum of two strip operators $H^1 \oplus H^2$ (two separated strips). Then, adjusting one of the $V_n$ one may create an eigenvalue for $H^1$, lying outside of its essential spectrum, but lying inside the essential spectrum of $H^2$. In fact, one may have the part of $V_n$ belonging to $H^1$ non-random and create some fixed embedded eigenvalue (for all of $\omega$).
Thus, without further 'channel-mixing' assumptions, one can not expect to obtain pure absolutely continuous spectrum within $\Sigma$.
But under sufficient 'mixing' created by the $V_n$, this should be true.

Finitely, let us mention that Theorem~\ref{th:main} does also cover the  'full' strip case going from $-\infty$ to $\infty$.
That means, Theorem~\ref{th:main} also applies to random operators $\Hh_\omega$ on $\ell^2(\ZZ)\otimes \CC^l$ (rather than $\ell^2(\ZZ_+)$) of the form
$$
(\Hh_\omega \Psi)_n=-\Psi_{n+1}-\Psi_{n-1}+\Aa \Psi_n + \Vv_n \Psi_n
$$
where the $\Vv_n$ are independent Hermitian matrices satisfying
$$
\sum_{n=-\infty}^\infty \left( \| \EE(\Vv_n) \| + \EE(\|\Vv_n\|^2) \right) <\infty\;.
$$
To see this, we can transform the operator $\Hh$ unitarily to an operator $H_\omega$ on $\ell^2(\ZZ_+)\otimes \CC^{2l}$ of the above form \eqref{eq-def-H} by defining
$$
A=\pmat{\Aa & 0 \\ 0 & \Aa}, \quad V_n=\pmat{\Vv_n & 0 \\ 0 & \Vv_{-n-1}} \quad \text{for $n \geq 1$,\; and}\quad V_0=\pmat{\Vv_0 & -I \\ -I & \Vv_{-1}}
$$
It is obvious to see that the random matrices $V_n$ do indeed satisfy the conditions as above.

To picture the transformation,  just think of the case $l=1$, the doubly infinite discrete line,  and then flip over the negative line to make a half-infinity strip of width two. The off-diagonal blocks in $V_0$ then correspond to the connection in the $n=0$ shell of the strip, which corresponds to the points $0$ and $-1$ on the line:

\begin{center}
\includegraphics[width=8cm]{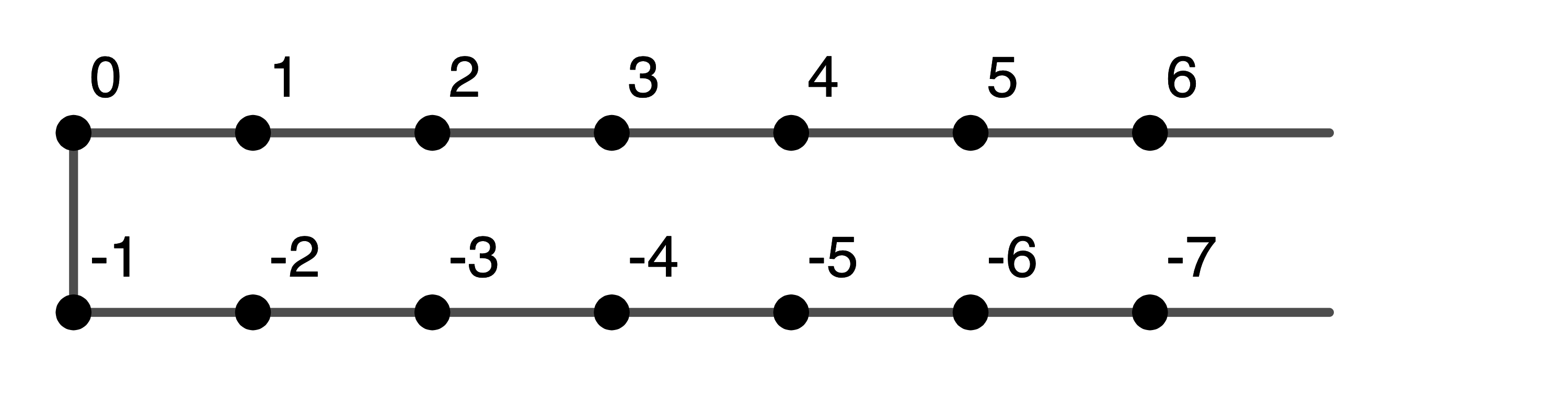}
\end{center}

\section{Transfer matrices, elliptic and hyperbolic channels \label{sec:3-channels}}

The eigenvalue equation $ H_{\omega}\Psi=z\Psi$ is a recursion that can be written in the matrix form as follows:
$$\left(\begin{matrix}
\Psi_{n+1}\\
\Psi_{n}
\end{matrix}\right)=T^z_{n} \left(\begin{matrix}
\Psi_{n}\\
\Psi_{n-1}
\end{matrix}\right) \qtx{where}
T^z_n=\begin{pmatrix} V_n+A-z I & - I \\ I & 0\end{pmatrix}\;.
$$
Iteration leads to the products of transfer matrices for $n>m$,
$$
\pmat{\Psi_{n+1} \\ \Psi_n }=T^z_{m,n} \pmat{\Psi_m \\ \Psi_{m-1}} \qtx{where} T^z_{m,n}=T^z_n T^z_{n-1} \cdots T^z_{m+1} T^z_m\;.
$$
We may write
$$
T^\lambda_n=T_{H_0}^\lambda+\pmat{V_n & 0 \\ 0 & 0} \qtx{where}
T^\lambda_{H_0}=\pmat{A-\lambda I & -I \\ I & 0}
$$
is basically the transfer matrix of the unperturbed operator $H_0$.
We will now write the transfer matrix in some basis which diagonalises $T_{H_0}^\lambda$.
Recall,  $A$ is assumed diagonal and its eigenvalues are $\alpha_1, \ldots, \alpha_l$.  Adopting the notions of \cite{Sadel2011} we define:
\begin{definition} Let $\lambda \in \RR$.  We call the $j$-th channel 
\begin{enumerate}
 \item \textbf{Elliptic} at $\lambda$ if $|\alpha_j-\lambda|<2$
\item \textbf{Hyperbolic} at $\lambda$ if $|\alpha_j-\lambda|>2$
\item \textbf{Parabolic} at $\lambda$ if $|\alpha_j-\lambda|=2$ 
\end{enumerate}
\end{definition}

Now fix some $\lambda \in \Sigma$.  Note that by the definition of $\Sigma$,  there are no parabolic channels and there is at least one elliptic channel at $\lambda$ 
We assume the channels to be ordered such that:
$$|\alpha_j-\lambda|<2 \quad  \forall j \in \{1,...,l_e\} $$
$$|\alpha_j-\lambda|>2 \quad \forall j \in \{l_e+1,...,l\} $$
Note that the set of all $\lambda$ satisfying these inequalities  is some open interval $(\lambda_0, \lambda_1) \subset \Sigma$. We later vary $\lambda$ slightly within this interval.  
For $\lambda\in (\lambda_0, \lambda_1)$,  and $ j\in \{1,...,l_e\} $  we define $k_j = k_j(\lambda) \in (0,\pi)$ by 
$$2\cos(k_j)=\alpha_j-\lambda\;.$$ 
For $j\in \{1,...,l_h\} $ with $l_h=l-l_e$ we define  $\gamma_j = \gamma_j(\lambda) \in \RR$,  $|\gamma_j|>1$,  by
$$\gamma_j + \frac{1}{\gamma_j }=\alpha_{j+l_e}-\lambda $$
We define the diagonal matrices $$\Gamma= \left(\begin{matrix}
\gamma_1 & 0  & \hdots  & 0   \\
0 & \gamma_2  & 0 & 0   \\
\vdots & 0  &\ddots & 0   \\
0 & 0  & 0  &  \gamma_{l_h}  
\end{matrix}\right)\,,\quad K= \left(\begin{matrix}
k_1 & 0  & \hdots  & 0   \\
0 & k_2  & 0 & 0   \\
\vdots & 0  &\ddots & 0   \\
0 & 0  & 0  &  k_{l_e}  
\end{matrix}\right) $$ 
such that
 $$A-\lambda I =\left(\begin{matrix}
2\cos(K) & 0  \\
0 & \Gamma + \Gamma^{-1}
\end{matrix}\right)\;. $$
Thus,  for the transfer matrices of $H_0$ we find
$$T^{\lambda}_{H_0} =  \left(\begin{matrix}
2\cos(K)  & 0  &-I  & 0   \\
0 &   \Gamma+\Gamma^{-1}  & 0 & -I  \\
I & 0  &0 & 0   \\
0 & I  & 0  & 0 
\end{matrix}\right)\;.$$

Also we note that $$ T^\lambda_{H_0} \left(\begin{matrix}
e^{\pm i K}  \\ 0  \\I  \\ 0  \end{matrix}\right) = \left(\begin{matrix}
e^{\pm i K}  \\ 0  \\I  \\ 0  \end{matrix}\right) e^{\pm i K}, \qquad
T^\lambda_{H_0} \left(\begin{matrix}
0 \\ \Gamma^{\pm 1} \\ 0  \\ I  \end{matrix}\right) =\left(\begin{matrix}
0 \\ \Gamma^{\pm 1} \\ 0  \\ I  \end{matrix}\right) \Gamma^{\pm 1} $$
so that $e^{\pm ik_j}$, $\gamma_j$ and $\gamma_j^{-1}$ are the eigenvalues of $T^\lambda_{H_0}$.
In order to diagonlize $T^\lambda_{H_0}$ we introduce  
$$
Q_\lambda = \pmat{e^{iK} &  e^{-iK} & 0 & 0 \\ 0 & 0 & \Gamma^{-1} & \Gamma \\ I_{l_e} & I_{l_e} & 0 & 0 \\ 0 & 0 & I_{l_h} & I_{l_h} }\;,
$$
where $I_d$ is the unit matrix of size $d \times d$,  then
\begin{equation}\label{eq-QTQ}
Q_\lambda^{-1} T^\lambda_n Q_\lambda\,=\,  T^\lambda \,+\,
\Vv^\lambda_n
\end{equation}
with $T^\lambda$ being diagonal, more precisely,   
\begin{equation}\label{eq-T^l}
T^\lambda\,=\,\pmat{e^{iK} & 0 & 0 & 0 \\ 0 & e^{-iK} & 0 & 0 \\ 
0 & 0 & \Gamma^{-1} & 0 \\ 0 & 0 & 0 & \Gamma}\;,\qquad 
\Vv^\lambda_n = Q_\lambda^{-1} \pmat{ V_n & 0 \\ 0 & 0} Q_\lambda\;.
\end{equation}

We note that $Q_\lambda$ is indeed invertible for $\lambda \in (\lambda_0, \lambda_1)$ as
$e^{i k_j} \neq e^{-ik_j}$ and $\gamma_j \neq 1/\gamma_j$ in this case.
Defining
\begin{equation}
\Qq_K = (e^{iK}-e^{-iK})^{-1}\;, \quad  \Qq_\Gamma = (\Gamma^{-1} - \Gamma)^{-1}
\end{equation}
we find
\begin{equation}\label{eq-Q^-1}
Q_\lambda^{-1}\,=\,
\pmat{ \Qq_K & 0 & -e^{-iK} \Qq_K & 0 \\ -\Qq_K & 0 & \Qq_K e^{iK} & 0 \\ 0 & \Qq_\Gamma  & 0 & - \Gamma  \Qq_\Gamma \\ 0 & - \Qq_\Gamma & 0 & \Gamma^{-1}  \Qq_\Gamma }
\end{equation}

Now,  in order to work with uniform estimates we will restrict our consideration to a compact interval $[a,b] \subset (\lambda_0, \lambda_1) \subset \Sigma$.  Chosen such a compact interval and allowing complex values for $k_j$ and $\gamma_j$,  we can extend the definitions of 
$K=K(\lambda), \Gamma=\Gamma(\lambda), Q_\lambda,  Q_\lambda^{-1}, T^\lambda$ analytically to spectral parameters $z$ in the complex plane, $z=\lambda+i\eta \in [a,b]+i[-c,c]\subset \CC$,  for $c$ small enough.  This means $\lambda \in [a,b], \; \eta \in[-c,c]$, 
 and the equations \eqref{eq-QTQ} and \eqref{eq-T^l} still hold with $\lambda$ replaced by $z$. 
  We will need this extension in some part to use analyticity arguments.

Choosing $c>0$ small enough,  one can guarantee by compactness and analyticity arguments,  that there is some $\gamma>0$ such that 
\begin{equation} \label{eq-Gamma-estimate}
\left\| \Gamma^{-1}(\lambda+i\eta) \right\|\,\leq\,  e^{-2\gamma}\;, \qtx{and}
\left\| e^{\pm iK(\lambda+i\eta)} \right\|\,\leq\, e^{\gamma}\quad
\forall\, \lambda \in [a,b], \; \forall \,|\eta|\leq c\;.
\end{equation}
Note for $\lambda \in [a,b]$ we have $\|e^{\pm i K}\|=1$.

\section{The key estimates \label{sec:3-key-est}}

The estimates in this section are somewhat independent of the rest of the paper.
But in many respects, it is the key part of the proof.

\vspace{.5cm}

We consider the following general situation:
Let be given independent random $(l_0+l_1) \times (l_0+l_1)$ matrices of the form
\begin{equation}\label{eq-setup-T}
T_n\,=\,T\,+\,W_n \qtx{where} T=\begin{pmatrix}
S &   \\
 & \Gamma  \\
\end{pmatrix}\;,\quad S\in \CC^{l_0 \times l_0} , \quad \Gamma \in \CC^{l_1 \times l_1}
\end{equation}
where for some fixed $\gamma>0$ we have
\begin{equation}\label{eq:cond1}
    \|S\|\,\leq\,e^\gamma \qtx{,} \|\Gamma^{-1}\| \leq e^{-2\gamma} \;.
\end{equation}
(Later we use $l_1=l_h$ and $l_0=2l_e+l_h$.)

Note that the second condition implies $\|\Gamma v \|\geq e^{2\gamma} \|v\|$ for any vector $v\in\CC^{l_0+l_1}$.
Moreover, $W_n$ are independent random $(l_0+l_1) \times (l_0+l_1)$ matrices satisfying, with some fixed constant $\Cc_W>0$, 
\begin{equation} \label{eq:cond2}
    \|W_n\| \leq \frac{e^{2\gamma}-e^{\gamma}}{4} \qtx{and} \sum_{n=1}^\infty \| \EE(W_n) \| + \EE(\|W_n\|^2)\,\leq \Cc_W \,<\,\infty
\end{equation}
For certain parts we will also assume the stricter bound $\|S\| \leq 1$ (cf. Proposition~\ref{prop-limits}). In fact, at real spectral parameters $\lambda$ we will have this bound, however, for some arguments we need to allow some small imaginary part, which is why in general we only assume $\|S\| \leq e^\gamma$ in this section.

Now let us consider the Markov process of $(l_0+l_1)\times (l_0+l_1)$ matrices given by
$$
\Xx_0=I,\quad \Xx_{n+1} = T_n\, \Xx_{n}\;.
$$
Using the splitting into blocks of sizes $l_0$ and $l_1$ like above we write
\begin{equation}
    \Xx_n=\begin{pmatrix} A_n & B_n \\ C_n & D_n
    \end{pmatrix} \qtx{and} W_n=\begin{pmatrix} a_n & b_n \\ c_n & d_n \end{pmatrix}
\end{equation}
From the process $\Xx_n$ we will define the process of pairs of matrices $(X_n, Z_n)$ given by
\begin{equation}
    X_n=A_n-B_nD_n^{-1} C_n\;,\quad Z_n=B_n D_n^{-1}\;.
\end{equation}
$X_n$ is a so called Schur-complement. Some standard calculations , see for instance \cite{SadelV},  show that $(X_n, Z_n)$ can be seen as the process of equivalence classes of $\Xx_n$ defining
$$
\Xx_1 \sim \Xx_2 \quad \Leftrightarrow \quad \Xx_1 = \Xx_2 \begin{pmatrix} I & 0 \\ M & G\end{pmatrix}
$$
with $I$ being the $l_0 \times l_0$ identity matrix, $G\in \GL(l_1)$ and $M$ being any $l_1 \times l_0$ matrix. Note that the set of matrices of the form $ \left(\begin{smallmatrix} I & 0 \\ M & G\end{smallmatrix}\right)$ is a group.

In that sense if $D_n$ is invertible we get
$$
\Xx_{n+1}=\pmat{A_n & B_n \\ C_n & D_n} \sim 
\pmat{A_n & B_n \\ C_n & D_n} \pmat{I & \nul \\ -D_n^{-1} C_n & D_n^{-1}}=
\pmat{X_n & Z_n \\ \nul & I}\;
$$
and we find
$$
\Xx_{n+1} \sim
\pmat{S+a_n & b_n \\ c_n & \Gamma+d_n} \pmat{X_n & Z_n \\ \nul & I} = 
\pmat{(S+a_n)X_n & (S+a_n)Z_n +b_n \\ c_n X_n & c_n Z_n+\Gamma+d_n}
$$
which leads to the identities
\begin{equation}\label{eq-Z_n+1}
Z_{n+1}=\big((S+a_n)Z_n+b_n\big) \big( c_nZ_n+\Gamma+d_n\big)^{-1}\
\end{equation}
\begin{equation}\label{eq-X_n+1}
X_{n+1}=(S+a_n)X_n-Z_{n+1}\,c_n X_n
\end{equation}
provided that $D_n^{-1}$ and $D_{n+1}^{-1}$ exist.

\vbox{
\begin{proposition}\label{prop-boundZ}
Under the given assumptions \eqref{eq:cond1} and \eqref{eq:cond2} one finds:
$D_n$ is invertible for all $n \in \NN$, hence, $X_n$ and $Z_n$ are well defined for all $n$ and
$$
\sup_{n \in \NN} \|Z_n\| \leq 1$$
\end{proposition}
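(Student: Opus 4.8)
The plan is to prove the statement by induction on $n$, carrying along the joint invariant ``$D_n$ is invertible and $\|Z_n\|\le 1$''. The base case is immediate: since $\Xx_0=I$ we have $A_0=I$, $B_0=\nul$, $C_0=\nul$, $D_0=I$, so $D_0$ is invertible and $Z_0=B_0D_0^{-1}=\nul$, whence $\|Z_0\|=0\le 1$ and $X_0=I$.

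For the inductive step, assume $D_n$ is invertible and $\|Z_n\|\le 1$. The first (and essentially only nontrivial) task is to establish the invertibility of
$$
M_n\,:=\,c_nZ_n+\Gamma+d_n\,\in\,\CC^{l_1\times l_1}\;.
$$
From $\Xx_{n+1}=T_n\Xx_n=\pmat{S+a_n & b_n \\ c_n & \Gamma+d_n}\pmat{A_n & B_n \\ C_n & D_n}$ one reads off $D_{n+1}=c_nB_n+(\Gamma+d_n)D_n$, and since $B_n=Z_nD_n$ this gives $D_{n+1}=\big(c_nZ_n+\Gamma+d_n\big)D_n=M_nD_n$. Thus invertibility of $M_n$, together with that of $D_n$, yields invertibility of $D_{n+1}$; only after this is secured may one legitimately use the recursion \eqref{eq-Z_n+1}. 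To bound $M_n$ from below I use that \eqref{eq:cond1} gives $\|\Gamma v\|\ge e^{2\gamma}\|v\|$ for all $v$, that each block of $W_n$ has operator norm at most $\|W_n\|$, and that $\|Z_n\|\le 1$; hence for every $v\in\CC^{l_1}$,
$$
\|M_nv\|\,\ge\,\|\Gamma v\|-\|c_n\|\,\|Z_n\|\,\|v\|-\|d_n\|\,\|v\|\,\ge\,\big(e^{2\gamma}-2\|W_n\|\big)\|v\|\,\ge\,\tfrac{e^{2\gamma}+e^{\gamma}}{2}\,\|v\|\;,
$$
using $\|W_n\|\le(e^{2\gamma}-e^{\gamma})/4$ from \eqref{eq:cond2}. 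So $M_n$ is injective, hence invertible, with $\|M_n^{-1}\|\le 2/(e^{2\gamma}+e^{\gamma})$, and $D_{n+1}$ is invertible.

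It remains to bound the numerator in \eqref{eq-Z_n+1}. Using $\|S\|\le e^{\gamma}$ from \eqref{eq:cond1}, $\|Z_n\|\le 1$, and again that the blocks $a_n,b_n$ of $W_n$ have norm at most $\|W_n\|\le(e^{2\gamma}-e^{\gamma})/4$,
$$
\big\|(S+a_n)Z_n+b_n\big\|\,\le\,\|S\|+\|a_n\|+\|b_n\|\,\le\,e^{\gamma}+2\|W_n\|\,\le\,\tfrac{e^{2\gamma}+e^{\gamma}}{2}\;.
$$
Combining this with $\|M_n^{-1}\|\le 2/(e^{2\gamma}+e^{\gamma})$, the formula \eqref{eq-Z_n+1} gives $\|Z_{n+1}\|\le 1$, which closes the induction and proves the proposition.

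There is no real obstacle here beyond bookkeeping: the point is simply that the constant $(e^{2\gamma}-e^{\gamma})/4$ in \eqref{eq:cond2} is chosen precisely so that both the lower bound on $\|M_n v\|$ and the upper bound on the numerator equal $(e^{2\gamma}+e^{\gamma})/2$, making their quotient exactly $\le 1$. The only subtlety to watch is logical order: one must derive $D_{n+1}=M_nD_n$ directly from $\Xx_{n+1}=T_n\Xx_n$ and prove $M_n$ invertible \emph{before} applying \eqref{eq-Z_n+1}, since that recursion was only justified under the assumption that $D_{n+1}^{-1}$ exists.
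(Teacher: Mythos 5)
Your proof is correct and follows essentially the same route as the paper: the same induction on the invariant ``$D_n$ invertible and $\|Z_n\|\le 1$'', the same lower bound $\|(c_nZ_n+\Gamma+d_n)v\|\ge(e^{2\gamma}-2\|W_n\|)\|v\|\ge\frac{e^{2\gamma}+e^\gamma}{2}\|v\|$, and the same numerator bound $e^\gamma+2\|W_n\|\le\frac{e^{2\gamma}+e^\gamma}{2}$ yielding $\|Z_{n+1}\|\le 1$. Your explicit remark that $D_{n+1}=M_nD_n$ must be established before invoking \eqref{eq-Z_n+1} is a point the paper also makes, via the identity $D_{n+1}D_n^{-1}=c_nZ_n+\Gamma+d_n$.
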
 
}

\begin{proof}
First we note that by straight forward calculations one finds for $c>0$ and some square $d\times d$ matrix $M$ that
\begin{equation}\label{lem:est-inv}
\|Mv\| \geq c \|v\| \qtx{for all $v \in \CC^d$} \Leftrightarrow \quad M\;\;\text{is invertible and}\;\; \|M^{-1}\| \leq \frac1c
\end{equation}
From there we find:
\begin{lemma} \label{lem:estZ-1} 
Assume $\lVert Z_n \rVert \leq 1$ and that the bounds  \eqref{eq:cond1},  \eqref{eq:cond2}  hold.  Then
$$\lVert (c_n Z_n + \Gamma + d_n)^{-1} \rVert\,\leq\, \frac{1}{e^{2\gamma}-2\|W_n\|}\,\leq\, \frac{2}{e^{2\gamma}+e^\gamma}$$
\end{lemma}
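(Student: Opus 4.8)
The plan is to apply the invertibility criterion \eqref{lem:est-inv} to the matrix $M := c_n Z_n + \Gamma + d_n$ with constant $c := e^{2\gamma} - 2\|W_n\|$, and then to reduce the second inequality to the smallness bound on $\|W_n\|$ contained in \eqref{eq:cond2}.

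First I would record the trivial block bounds $\|c_n\| \leq \|W_n\|$ and $\|d_n\| \leq \|W_n\|$, which hold because $c_n$ and $d_n$ are submatrices of $W_n$ (each obtained by pre- and post-composing $W_n$ with coordinate projections and inclusions). Combining these with the hypothesis $\|Z_n\| \leq 1$ and with the estimate $\|\Gamma v\| \geq e^{2\gamma}\|v\|$ — the consequence of $\|\Gamma^{-1}\| \leq e^{-2\gamma}$ noted right after \eqref{eq:cond1} — the reverse triangle inequality gives, for every vector $v$,
$$\|M v\| \,\geq\, \|\Gamma v\| - \|c_n Z_n v\| - \|d_n v\| \,\geq\, \big(e^{2\gamma} - 2\|W_n\|\big)\|v\|\;.$$
Since \eqref{eq:cond2} forces $\|W_n\| \leq \tfrac14(e^{2\gamma}-e^\gamma) < \tfrac14 e^{2\gamma}$, the constant $e^{2\gamma}-2\|W_n\|$ is strictly positive; hence \eqref{lem:est-inv} applies and yields that $M$ is invertible with $\|M^{-1}\| \leq (e^{2\gamma}-2\|W_n\|)^{-1}$, which is the first asserted inequality.

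The second inequality, $(e^{2\gamma}-2\|W_n\|)^{-1} \leq 2(e^{2\gamma}+e^\gamma)^{-1}$, is then pure arithmetic: multiplying through by the two positive denominators, it is equivalent to $e^{2\gamma}+e^\gamma \leq 2e^{2\gamma}-4\|W_n\|$, i.e. to $\|W_n\| \leq \tfrac14(e^{2\gamma}-e^\gamma)$, which is precisely the first bound in \eqref{eq:cond2}. There is no genuine obstacle here; the only step needing a moment's care is keeping the denominator $e^{2\gamma}-2\|W_n\|$ positive, which is guaranteed by that same bound.
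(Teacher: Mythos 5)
Your proposal is correct and follows essentially the same route as the paper: both apply the equivalence \eqref{lem:est-inv} to $c_nZ_n+\Gamma+d_n$ after a reverse triangle inequality using $\|\Gamma v\|\geq e^{2\gamma}\|v\|$, $\|Z_n\|\leq 1$ and $\|c_n\|,\|d_n\|\leq\|W_n\|$, and then reduce the second inequality to the bound $\|W_n\|\leq\tfrac14(e^{2\gamma}-e^\gamma)$ from \eqref{eq:cond2}. Your explicit remarks on the positivity of the denominator and on the submatrix bounds are details the paper leaves implicit, but the argument is the same.
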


To show this,  we use the equivalence \eqref{lem:est-inv} noting
\begin{align*}&\lVert (c_n Z_n + (\Gamma + d_n))v \rVert \geq  \lVert \Gamma v \rVert  - \lVert  (c_n Z_n +d_n)v \rVert \geq e^{2\gamma}\lVert  v \rVert -   (\lVert  c_n\rVert\lVert  Z_n \rVert +\lVert d_n \rVert)\lVert v  \rVert \\ &\qquad  \geq \left( e^{2\gamma}-2 \|W_n\|\right)\|v\|\,\geq\, \left(e^{2\gamma}-\frac{e^{2\gamma}-e^\gamma}2 \right) \lVert v  \rVert
= \frac{e^{2\gamma}+e^\gamma}2\;\|v\|\;.
\end{align*}
Now we proof by induction that $D_n$ is invertible and $\|Z_n\|\leq 1$.
First, we notice $D_0=I$ is invertible and $\|Z_0\|=\|0\|=0 \leq 1$. Now assume $\|Z_n\|\leq 1$ and $D_n$ being invertible. We find
$$ \left(\begin{matrix}
A_{n+1} & B_{n+1}   \\
C_{n+1} & D_{n+1}  
\end{matrix}\right)  = \left(\begin{matrix}
S+a_{n} & b_{n}   \\
c_{n} & \Gamma+d_{n}  
\end{matrix}\right) \left(\begin{matrix}
A_n & B_n   \\
C_n & D_n  
\end{matrix}\right) $$
and by the lower right block
\begin{align*}
     D_{n+1}D_n^{-1}  &= c_n B_n D_n^{-1} + (\Gamma + d_n) \,=\,  c_n Z_n + (\Gamma + d_n)
\end{align*}
Lemma~\ref{lem:estZ-1} now shows invertibility of $D_{n+1} D_n^{-1}$ and hence of $D_{n+1}$.
Using \eqref{eq-Z_n+1} we find
\begin{align*}
    \lVert Z_{n+1}\rVert&=\lVert ((S+a_n)Z_n+b_n) (c_n Z_n + (\Gamma + d_n))^{-1} \lVert \;\leq\; \left(e^\gamma+\frac{e^{2\gamma}-e^\gamma}{2}\right) \,\frac{2}{e^{2\gamma}+e^\gamma}\,=\,1\;.
\end{align*}
This finishes the induction.
\end{proof}

\begin{proposition} \label{prop-limits}
Under the given assumptions \eqref{eq:cond1} and \eqref{eq:cond2} and the aditional property $\lim_{n\to\infty} W_n=0$ we find
$$
\lim_{n\to \infty} Z_n = 0\,,\quad \lim_{n\to \infty} D_n^{-1} = 0 \qtx{and}
\lim_{n \to \infty} D_n^{-1} C_n\,=\,Y \quad \text{exists}\;.
$$
\end{proposition}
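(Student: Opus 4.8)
The plan is to derive each of the three limits from the recursions already established, the point in each case being that the relevant multiplicative factors converge, thanks to $W_n\to0$, to constants strictly less than $1$ in the "contracting" directions.

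\textbf{Step 1: $Z_n\to0$.} I start from \eqref{eq-Z_n+1} together with Lemma~\ref{lem:estZ-1}, which applies since $\|Z_n\|\le1$ by Proposition~\ref{prop-boundZ}. This gives
$$
\|Z_{n+1}\|\,\le\,\frac{\big(e^\gamma+\|W_n\|\big)\|Z_n\|+\|W_n\|}{e^{2\gamma}-2\|W_n\|}\,=\,q_n\|Z_n\|+\varepsilon_n ,
$$
where, using $W_n\to0$, we have $q_n\to e^{-\gamma}<1$ and $\varepsilon_n\to0$. A standard argument for inhomogeneous contractions (pick $q'\in(e^{-\gamma},1)$ and $N$ with $q_n\le q'$ for $n\ge N$, iterate, and split the resulting finite sum) yields $\|Z_n\|\to0$.

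\textbf{Step 2: $D_n^{-1}\to0$.} From the block identity in the proof of Proposition~\ref{prop-boundZ}, $D_{n+1}=E_nD_n$ with $E_n:=c_nZ_n+\Gamma+d_n$; since $D_0=I$ this gives $D_n^{-1}=E_0^{-1}E_1^{-1}\cdots E_{n-1}^{-1}$, hence $\|D_n^{-1}\|\le\prod_{k=0}^{n-1}\|E_k^{-1}\|\le\prod_{k=0}^{n-1}\big(e^{2\gamma}-2\|W_k\|\big)^{-1}$ by Lemma~\ref{lem:estZ-1}. As $W_k\to0$ the factors tend to $e^{-2\gamma}<1$, so the product tends to $0$, in fact geometrically.

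\textbf{Step 3: $D_n^{-1}C_n$ converges.} Combining $\Xx_{n+1}=T_n\Xx_n$ with the (readily verified, exact) factorisation $\Xx_n=\left(\begin{smallmatrix}X_n&Z_n\\0&I\end{smallmatrix}\right)\left(\begin{smallmatrix}I&0\\C_n&D_n\end{smallmatrix}\right)$ and the product $T_n\left(\begin{smallmatrix}X_n&Z_n\\0&I\end{smallmatrix}\right)=\left(\begin{smallmatrix}(S+a_n)X_n&(S+a_n)Z_n+b_n\\c_nX_n&E_n\end{smallmatrix}\right)$ already computed in Section~\ref{sec:3-channels}, one reads off $C_{n+1}=c_nX_n+E_nC_n$ and $D_{n+1}=E_nD_n$. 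Writing $Y_n:=D_n^{-1}C_n$ (note $Y_0=0$), this produces the telescoping identity
$$
Y_{n+1}-Y_n\,=\,D_n^{-1}E_n^{-1}c_nX_n .
$$
From \eqref{eq-X_n+1} and $\|Z_{n+1}\|\le1$ one gets $\|X_{n+1}\|\le\big(e^\gamma+2\|W_n\|\big)\|X_n\|$, hence $\|X_n\|\le\prod_{k=0}^{n-1}\big(e^\gamma+2\|W_k\|\big)$, so combined with the Step 2 bound,
$$
\|D_n^{-1}\|\,\|X_n\|\,\le\,\prod_{k=0}^{n-1}\frac{e^\gamma+2\|W_k\|}{e^{2\gamma}-2\|W_k\|},
$$
whose factors tend to $e^{-\gamma}<1$, so the right-hand side decays geometrically in $n$. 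Since $\|E_n^{-1}\|$ and $\|c_n\|\le\|W_n\|$ remain bounded, $\sum_n\|Y_{n+1}-Y_n\|<\infty$; thus $(Y_n)$ is Cauchy and converges to the claimed limit $Y$ (equivalently $Y=\sum_{n\ge0}D_n^{-1}E_n^{-1}c_nX_n$).

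The main obstacle I expect is Step 3: the individual blocks $C_n,D_n$ and the Schur complements $X_n$ may grow without bound, and convergence of $Y_n$ hinges on the geometric decay of $\|D_n^{-1}\|\,\|X_n\|$ outpacing the possibly only slow ($\ell^2$, not $\ell^1$) decay of $\|c_n\|=\|W_n\|$; extracting the clean telescoping identity through the factorisation of $\Xx_n$ is what makes this mechanism transparent.
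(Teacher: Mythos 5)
Your proposal is correct and follows essentially the same route as the paper: the same recursion bound for $\|Z_{n+1}\|$ (the paper packages the contraction as an induction giving $\|Z_n\|\le e^{-k\gamma/2}$ eventually, you use the standard inhomogeneous-contraction argument), the same product bound $\|D_{n+1}^{-1}\|\le\|D_n^{-1}\|\,\|E_n^{-1}\|$, and the same telescoping identity $D_{n+1}^{-1}C_{n+1}-D_n^{-1}C_n=D_{n+1}^{-1}c_nX_n$ with absolute convergence forced by the geometric decay of $\|D_n^{-1}\|\,\|X_n\|$ against the mere boundedness of $\|c_n\|$. The differences are purely presentational.
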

\begin{proof} We will prove by induction that there exists $N_k$ such that for all $n>N_k$ we have
$\|Z_n\|\leq e^{-k\gamma/2}$ for all $n>N_k$. The induction start for $k=0$ is given by Proposition~\ref{prop-boundZ}.
Assume the statement is true for $k$. Then we find $N$ such that for all $n>N$
$$\|W_n\|\,<\,e^{-k\gamma/2}\;\frac{e^{3\gamma/2}-e^{\gamma}}{4} \qtx{and} \|Z_n\|\leq e^{-k\gamma/2}\;.
$$
Using \eqref{eq-Z_n+1},  Lemma~\ref{lem:est-inv} we find for $n>N$ that
\begin{align*}
\|Z_{n+1}\|\,&\leq\,\;\frac{ (e^\gamma+\|W_n\|)\|Z_n\|+\|W_n\|}{e^{2\gamma}-2\|W_n\|}\,\leq\
\\[.2cm]
&\leq\,\,e^{-k\gamma/2}\;
\frac{e^\gamma+\frac14(e^{3\gamma/2}-e^\gamma)(1+e^{-k\gamma/2})}{e^{2\gamma}-\frac12(e^{3/2\gamma}-e^\gamma)e^{-k\gamma/2}}\,\\[.2cm]
&\leq\,e^{-k\gamma/2}\;\frac{e^\gamma+\frac12(e^{3\gamma/2}-e^\gamma)}{e^{2\gamma}-\frac12(e^{2\gamma}-e^{3\gamma/2})}\,=\, e^{-(k+1)\gamma/2}
\end{align*}
For the last line we used the estimates $e^{-k\gamma/2} \leq 1 \leq e^{\gamma/2}$. 
This finishes the induction and the first statement.

For the second statement, note
$$
\|D_{n+1}^{-1}\| \leq \|D_n^{-1}\|\,\|[\Gamma+d_n+c_n Z]^{-1}\|\,\leq\, \frac{2}{e^{2\gamma}+e^\gamma} \|D_n\|^{-1}\;.
$$
which gives 
\begin{equation}
\|D_n^{-1}\|\,\leq\,  \left( \frac{e^{2\gamma}+e^\gamma}{2}\right)^{-n}\,\to\, 0
\end{equation}
where we use that $D_0=I$ as $\Xx_0=I$.  

Moreover,to get the last assertion,  note
\begin{align*}
D_{n+1}^{-1} C_{n+1}\,&=\,[(\Gamma+d_n)D_n+c_nB_n]^{-1}[c_nA_n+(\Gamma+d_n)C_n]\,=\,\\
&=\, D_n^{-1} \left[ \Gamma+d_n+c_n Z_n\right]^{-1}[ c_nA_n+(\Gamma+d_n)C_n]
\end{align*}
Now, using
\begin{align*}
&\left[ \Gamma+d_n+c_n Z_n\right]^{-1}(\Gamma+d_n)
=\, I-\left[\Gamma+d_n+c_nZ_n \right]^{-1} c_n Z_n
\end{align*}
we obtain
\begin{align*}
D_{n+1}^{-1} C_{n+1}\,&=\, D_n^{-1} C_n + D_n^{-1} \left[ \Gamma+d_n+c_n Z_n\right]^{-1} c_n (A_n-Z_n C_n)\\
&=\, D_n^{-1} C_n\,+\,D_{n+1}^{-1} c_n X_n\;.
\end{align*}
Therefore, using $D_0=I, C_0=0$,
\begin{equation}
D_{n+1}^{-1}C_{n+1}=\sum_{k=0}^n D_{k+1}^{-1} c_k X_k\;.
\end{equation}

Using \eqref{eq-X_n+1} and $\|Z_n\|\leq 1$ and the condition \eqref{eq:cond1} we obtain
$$
\|X_{n+1}\|\,\leq\, (e^\gamma+2\|W_n\|) \|X_n\| \;.
$$
We note that for $\varepsilon>0$ sufficiently small we have $e^\gamma+2\varepsilon < \frac{e^{2\gamma}+e^\gamma}2$.
Now there exists $N>0$ such that for $n>N$ we have $\|W_n\|<\varepsilon$ and thus for $n>N$ we find for $\Cc_0=\|X_N\|$ that
$$
\|D_{n+1}^{-1} c_n X_n\| \leq \Cc_0 {\underbrace{\left(\frac{2(e^\gamma+2\varepsilon)}{e^{2\gamma}+e^\gamma}\right)}_{<1}}^n \varepsilon\;.
$$
Therefore,
$$
\sum_{n=0}^\infty D_{n+1}^{-1} c_n X_n
$$
converges absolutely and
$$
\lim_{n\to \infty} D_n^{-1} C_n = \sum_{k=0}^\infty D_{k+1}^{-1} c_k X_k 
$$
exists.
\end{proof}

Concerning probabilistic estimates,  the main point of this section is the following proposition.

\begin{proposition}\label{prop-boundX}
Under the given assumptions \eqref{eq:cond1} and \eqref{eq:cond2} and the additional condition $\|S\|\leq 1$,  one has
$$
\sup_n \, \EE(\|X_n\|^4)\,\leq \Cc_{W,\gamma}\,<\,\infty
$$
where $\Cc_{W,\gamma}$ is a continuous function in $\gamma>0,$ and $\Cc_W>0$ as they appear in \eqref{eq:cond1}, \eqref{eq:cond2}.
\end{proposition}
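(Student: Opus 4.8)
The plan is to run a conditional Lyapunov-type estimate for the fourth moment. Write $\Ff_n=\sigma(W_0,\dots,W_{n-1})$ and $\EE_n(\cdot)=\EE(\cdot\mid\Ff_n)$, so that $X_n,Z_n$ are $\Ff_n$-measurable while $W_n$ is independent of $\Ff_n$. A purely deterministic bound will not do: by \eqref{eq-X_n+1}, $\|Z_n\|\le1$ (Proposition~\ref{prop-boundZ}) and $\|S\|\le1$ one only gets $\|X_{n+1}\|\le(1+2\|W_n\|)\|X_n\|$, and $\sum_n\|W_n\|$ need not be finite. So I will exploit that the first-order-in-$W_n$ contribution has \emph{summable conditional mean}. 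I will work with the Hilbert--Schmidt norm $\|\cdot\|_{\mathrm{HS}}$, which linearizes the square: writing $X_{n+1}=(S+E_n)X_n$ with $E_n:=a_n-Z_{n+1}c_n$ (so $\|E_n\|\le2\|W_n\|$ by $\|Z_{n+1}\|\le1$),
\[
\|X_{n+1}\|_{\mathrm{HS}}^2=\|X_n\|_{\mathrm{HS}}^2+\Delta_n,\qquad
\Delta_n:=2\re\Tr\!\big(X_n^{*}S^{*}E_nX_n\big)+\|E_nX_n\|_{\mathrm{HS}}^2,
\]
and, using $\|S\|\le1$ and $\|W_n\|\le\tfrac14(e^{2\gamma}-e^{\gamma})$, one has $|\Delta_n|\le C_1(\gamma)\,\|W_n\|\,\|X_n\|_{\mathrm{HS}}^2$ with $C_1(\gamma)$ continuous in $\gamma$.

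Next I will control $\EE_n(\Delta_n)$. Splitting $E_n=\bar E_n+\tilde E_n$ with $\bar E_n:=\EE_n(E_n)$, the term $\Tr(X_n^{*}S^{*}\tilde E_nX_n)$ is linear in $\tilde E_n$, $X_n$ and $S$ are $\Ff_n$-measurable, and $\EE_n(\tilde E_n)=0$, so it has vanishing conditional mean; hence $\EE_n(\Delta_n)\le\epsilon_n\|X_n\|_{\mathrm{HS}}^2$ with $\epsilon_n:=2\|\bar E_n\|+4\,\EE(\|W_n\|^2)$. The hard part is to show
\[
\|\bar E_n\|\ \le\ C(\gamma)\big(\|\EE(W_n)\|+\EE(\|W_n\|^2)\big),
\]
which will make $\epsilon_n$ deterministic and summable. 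Since $\bar E_n=\EE(a_n)-\EE_n(Z_{n+1}c_n)$ and $\|\EE(a_n)\|\le\|\EE(W_n)\|$, the task reduces to $\EE_n(Z_{n+1}c_n)$. Using \eqref{eq-Z_n+1}, I will expand the resolvent by a Neumann series --- valid because $\|\Gamma^{-1}(d_n+c_nZ_n)\|\le2e^{-2\gamma}\|W_n\|\le\tfrac12$ by \eqref{eq:cond1}--\eqref{eq:cond2} and $\|Z_n\|\le1$ --- as
\[
(\Gamma+d_n+c_nZ_n)^{-1}=\Gamma^{-1}-\Gamma^{-1}(d_n+c_nZ_n)\Gamma^{-1}+R_n,\qquad\|R_n\|\le C(\gamma)\|W_n\|^2,
\]
substitute into $Z_{n+1}c_n=\big((S+a_n)Z_n+b_n\big)(\Gamma+d_n+c_nZ_n)^{-1}c_n$, and expand. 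Every resulting term then contains at least two factors among $a_n,b_n,c_n,d_n$ --- hence contributes $O(\EE(\|W_n\|^2))$ after $\EE_n$, using $\|Z_n\|\le1$, $\|S\|\le1$, $\|\Gamma^{-1}\|\le e^{-2\gamma}$ and the boundedness of $\|W_n\|$ --- with the sole exception of the genuinely linear term $SZ_n\Gamma^{-1}c_n$, whose conditional mean $SZ_n\Gamma^{-1}\EE(c_n)$ has norm at most $e^{-2\gamma}\|\EE(W_n)\|$ since $\EE_n(c_n)=\EE(c_n)$ is a block of $\EE(W_n)$. Collecting the finitely many, $\gamma$-continuous constants gives the bound on $\|\bar E_n\|$; in particular $\sum_n\epsilon_n<\infty$ with sum continuous in $\gamma$ and $\Cc_W$. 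I expect this Neumann-series bookkeeping to be the only genuinely delicate step; the rest is routine.

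Finally I will pass to the fourth moment. Squaring the increment identity gives $\|X_{n+1}\|_{\mathrm{HS}}^4=\|X_n\|_{\mathrm{HS}}^4+2\|X_n\|_{\mathrm{HS}}^2\Delta_n+\Delta_n^2$; taking $\EE_n$ and using that $\|X_n\|_{\mathrm{HS}}$ is $\Ff_n$-measurable, the bound $\EE_n(\Delta_n)\le\epsilon_n\|X_n\|_{\mathrm{HS}}^2$, and $\EE_n(\Delta_n^2)\le C_1(\gamma)^2\EE(\|W_n\|^2)\|X_n\|_{\mathrm{HS}}^4$, one obtains
\[
\EE_n\big(\|X_{n+1}\|_{\mathrm{HS}}^4\big)\le(1+\rho_n)\,\|X_n\|_{\mathrm{HS}}^4,\qquad \rho_n:=2\epsilon_n+C_1(\gamma)^2\EE(\|W_n\|^2),
\]
with $\sum_n\rho_n<\infty$. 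Taking full expectations, iterating from $\|X_0\|_{\mathrm{HS}}^4=l_0^2$ (as $\Xx_0=I$), and using $\prod_k(1+\rho_k)\le\exp(\sum_k\rho_k)$ then yields $\sup_n\EE(\|X_n\|_{\mathrm{HS}}^4)\le l_0^2\exp(\sum_k\rho_k)=:\Cc_{W,\gamma}$, which is finite and continuous in $\gamma>0$ and $\Cc_W>0$; since $\|X_n\|\le\|X_n\|_{\mathrm{HS}}$, the same bound holds for $\EE(\|X_n\|^4)$.
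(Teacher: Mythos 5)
Your argument is correct and follows essentially the same route as the paper's proof: expand the fourth moment, condition on the past so the terms linear in $W_n$ contribute only $\|\EE(W_n)\|$, and use a resolvent/Neumann expansion of $(\Gamma+d_n+c_nZ_n)^{-1}$ to isolate the single genuinely linear term $SZ_n\Gamma^{-1}c_n$ hidden in $Z_{n+1}$ --- the paper does exactly this via the identity $Z_{n+1}=SZ_n\Gamma^{-1}+M_n$ with $\|M_n\|\le 4e^{-2\gamma}\|W_n\|$, working with $\|X_n v_0\|$ for fixed vectors $v_0$ rather than the Hilbert--Schmidt norm. The only slip is that $\|X_{n+1}\|_{\mathrm{HS}}^2=\|SX_n\|_{\mathrm{HS}}^2+\Delta_n$ rather than $\|X_n\|_{\mathrm{HS}}^2+\Delta_n$; since $\|S\|\le 1$ this correction has a favorable sign and the rest of your estimate goes through unchanged.
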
 
\begin{proof}
Given a starting vector $v_0 \in \CC^{l_0}$ we define $(v_n)_n$ inductively by
$v_{n}\,=\,X_{n}\,v_0\;.$
Using \eqref{eq-X_n+1}  we find
\begin{align*}
    \lVert v_{n+1} \rVert^2 &=\langle v_0^{\ast} X_{n+1}^{\ast}, X_{n+1} v_0 \rangle\\
    &= v_0^* X_n^*[ S^*+a_{n}^*-c^*_{n}Z^*_{n+1}]  [S+a_{n}-Z_{n+1}c_{n}]X_n v_0 \\
    &= \underbrace{v_n^{\ast}S^* S v_n}_{\chi_1}+\underbrace{2\Re e (v_n^{\ast} S^* a_n  v_n)}_{\chi_2}+\underbrace{-2\Re e [v_n^{\ast} S^* Z_{n+1} c_n v_n]}_{\chi_3}\\
    &\;\;\;\;+\underbrace{v_n^{\ast} (a_n^*- c_n^* Z_{n+1}^*)(a_n-Z_{n+1}c_n) v_n}_{\chi_4}
\end{align*}

Now: 
\begin{align*}
    \lVert v_{n+1} \rVert^4 &= \chi_1^2+ \chi_2^2+\chi_3^2 +\chi_4^2 + 2\sum_{j=2}^{4} \sum_{i=1}^{j-1}\chi_i \chi_j 
\end{align*}

As $||S||\leq 1$ and $\|Z_n\|\leq 1$, we first note
\begin{align*}
 |\chi_1|\,&\leq\,  \|v_n\|^2 \\
 |\chi_2|\,&\leq\, 2 \|W_n\| \|v_n\|^2 \\
 |\chi_3|\,&\leq\, \|W_n\| \|v_n\|^2 \\
 |\chi_4|\,& \leq\, 4 \|W_n\|^2\,\|v_n\|^2
\end{align*}
The problematic terms, were we can not use the expectation outside the norm are $\chi_1 \chi_2$ and $\chi_1 \chi_3$.
For the other terms,  we remark
\begin{align}
&\EE(\chi_1^2+\chi_2^2+\chi_3^2+\chi_4^2+2\chi_1 \chi_4+2\chi_2 \chi_3 +2 \chi_2 \chi_4+2 \chi_3 \chi_4)\,\leq\, \nonumber \\ & \qquad\qquad \leq\; \EE\left( \|v_n\|^4 \left(1 +17\|W_n\|^2+24\|W_n\|^3+16\|W_n\|^4\right) \right) \nonumber \\
&\qquad \qquad \leq \; \EE(\|v_n\|^4) \left(1+\EE( \|W_n\|^2)[17+6 (e^{2\gamma}-e^\gamma)+(e^{2\gamma}-e^\gamma)^2)]\right)
\end{align}
For the last step we use the bound \eqref{eq:cond2} and the fact that $W_n$ is independent of $X_n$ and thus $v_n$.

For the frst problematic term, note that
$$
\EE(\chi_1 \chi_2) \,=\,\EE(\EE(\chi_1\chi_2|X_n))=\EE\big(\chi_1 2 \Re e(v_n^* S^* \EE(a_n) v_n)\big)
$$
using the fact that $v_n $ is $X_n$ measurable and $a_n$ is independent of $X_n$.
Thus
\begin{equation}
|\EE(\chi_1 \chi_2)|\,\leq\, 2\, \EE(\|v_n\|^4)\,\| \EE(W_n)\|\;.
\end{equation}
 Now for the term $\chi_1 \chi_3$ we want to use a similar estiamte. However, one of hte problem is now that $Z_{n+1}$ actually depends on $W_n$ and $X_n$.  However,  $W_n$ is independent of $\Xx_n$ and $(X_n, Z_n)$ are $\Xx_n$ measurable. 
 Thus,  we want to condition on $\Xx_n$. Furthermore,  before that, in order to handle some the inverse,  we use we use a resolvent identity together with \eqref{eq-Z_n+1} to find
 $$
Z_{n+1}\,=\,((S+a_n) Z_n+b_n)\left(\Gamma^{-1}- (\Gamma + d_n+c_n Z_n )^{-1}(c_n Z_n+d_n)\Gamma^{-1}\right) \\ $$
giving
$$
Z_{n+1}=((S+a_n )Z_n+b_n)\Gamma^{-1}- Z_{n+1}(c_n Z_n+d_n)\Gamma^{-1}  \;.
$$
Thus,
$$
Z_{n+1}=SZ_n \Gamma^{-1}\,+\,M_n 
$$
where
$$
\|M_n\|\,\leq\, 4\,\|W_n\|\,\|\Gamma^{-1}\| \,\leq\, 4\,e^{-2\gamma}\,\|W_n\| \;.
$$
Splitting up $Z_{n+1}$ this way gives
$$
\EE(\chi_1 \chi_3)\,=\,-2\Re e\,\EE(\chi_1 v_n^* S^* S Z_n \Gamma^{-1} c_n v_n)
\,-\,2 \Re e \EE(\chi_1 v_n^* S^*  M_n c_n v_n)\;.
$$ 
Using the bounds from above, we see
$$
 |2 \Re e \EE(\chi_1 v_n^* S^*  M_n c_n v_n)|\, \leq\,  8 e^{-2\gamma} \EE(\|W_n\|^2 \|v_n\|^4) = 8e^{-2\gamma}\, \EE(\|W_n\|^2) \EE(\|v_n\|^4)\;.
 $$
and
\begin{align*}
&\left|\EE(\chi_1 v_n^* S^* S Z_n \Gamma^{-1} c_n v_n)\right|=
\left| \EE\big(\EE(\chi_1v_n^* S^* S Z_n \Gamma^{-1} c_n v_n | \Xx_n) \big)\right| \\
&\qquad =
\left|\EE(\chi_1 v_n^* S^* S Z_n \Gamma^{-1} \EE(c_n) v_n)\right|\,\leq\,  e^{-2\gamma}\,\| \EE(W_n)\|\,\EE(\|v_n\|^4)
\end{align*}
Thus,  we have in total the bound
\begin{equation}
|\EE(\chi_1 \chi_3)|\,\leq\, \EE(\|v_n\|^4)\,\left( 2 e^{-2\gamma} \|\EE(W_n)\| + 8 e^{-2\gamma} \EE(\|W_n\|^2) \right)
\end{equation}
In summary we find
$$ \EE(\lVert v_{n+1} \rVert^4)\leq  \EE(\lVert v_n \rVert^4) (1+\alpha(\gamma)||\EE(W_n)||+\beta(\gamma)\EE||(W_n)||^2) $$
where $\alpha(\gamma)$ and $\beta(\gamma)$ are some positive conitnuous functions in $\gamma$.
Taking $\Cc_\gamma = \max(\alpha(\gamma), \beta(\gamma))$ we find 
\begin{align*}
    \EE(\lVert v_{n+1}\rVert^4) &\leq  \lVert v_0 \rVert^4 \prod_{n \geq 1}  (1+\alpha(\gamma)||\EE(W_n)|| +\beta(\gamma)\EE||(W_n)||^2))\\
    &\leq  \lVert v_0 \rVert^4 \exp\left[\Cc_\gamma \left(\sum_{n \geq 1} ||\EE(W_n)|| +\EE||(W_n)||^2\right)\right]\leq  \lVert v_0 \rVert^4 \exp(\Cc_\gamma \Cc_W)
\end{align*}
Use $\|X\| \leq \sum_k \|Xw_k\| $ for $(w_k)_k$ being some orthogonal basis to get the result.
\end{proof}

\section{Applying the key estimates to the transfer matrices}

The main point of this section will be to apply  the estimates from Section~\ref{sec:3-key-est} to the conjugated transfer matrices as developed
in Section~\ref{sec:3-channels}.
Like indicated at the end of Section~\ref{sec:3-channels} we choose some compact interval $[a,b] \subset \Sigma$ such that for $\lambda \in [a,b]$ the first $l_e$ channels are elliptic and the other $l-l_e=l_h$ channels are hyperbolic.
In the notations of the previous sections,  we have $l_0=2l_e+l_h$,  $l_1=l_h$ and the matrices $T$ and $S$ as defined in \eqref{eq-setup-T} are given by
$$
T\,=\, \pmat{S \\ & \Gamma} \;,\qquad  S\,=\,\pmat{e^{-iK} \\ & e^{iK} \\ & & \Gamma^
{-1}}\;.
$$
where $K$ and $\Gamma$ depend analytically on $z=\lambda+i\eta \in [a,b]+i[-c,c]$.
Using continuity and compactness arguments we have uniform estimates like
$$
\|\Gamma^{-1}\|\,<\, e^{-2\gamma}\;,\quad \|Q_z\|<\Cc_Q\;, \quad \|Q_z^{-1}\|<\Cc_Q
$$
for all $z=\lambda+i\eta$ with $\lambda \in [a,b]$ and $\eta \in [-c, c]$\;. This leads to
\begin{equation} \label{eq-V-Vv}
\|\Vv^z_n\|\,=\,\left\|Q_z^{-1} \pmat{V_n &0 \\ 0 & 0} Q_z \right\|\,\leq\, \Cc_Q^2 \,\|V_n\|\,=\,\Cc_Q^2\, \left\|V_n(\omega)\right\|\;.
\end{equation}
for all $z=\lambda+i\eta \in [a,b]+i[-c,c]\subset \CC$. 

In order to apply the results of Section~\ref{sec:3-key-est} we need
$\|\Vv^z_n\|<\frac{e^{2\gamma}-e^\gamma}{4}$.  We therefore will replace $\Vv^\lambda_n$ by
\begin{equation}
W^z_n=W^z_n(\omega)=\Vv^z_n(\omega)\,\cdot\,  1_{\|V_n\|<(e^{2\gamma}-e^\gamma)/(4\Cc_Q^2)}(\omega)\;
\end{equation}
where the latter expression is the indicator function on the event that $\|V_n(\omega)\|<\frac{e^{2\gamma}-e^\gamma}{4\Cc_Q^2}$ on the probability space $\Omega$.
This means essentially to replace the potential $V_n$ by
$$
\widehat V_n\,=\,V_n\,\cdot \,1_{\|V_n\|<\frac{e^{2\gamma}-e^\gamma}{4\Cc_Q^2}}\;.
$$
Note that by the estimates above 
\begin{equation}
\|W^z_n\|<\frac{e^{2\gamma}-e^\gamma}{4} \qtx{for all} z=\lambda+i\eta \in [a,b] + i [-c, c]
\end{equation}

We modify the transfer matrices accordingly and let
\begin{equation}
\widehat{T}^z_n=\pmat{A+\widehat V_n \,-\,z \,I & -I \\ I & \nul}\;.
\end{equation}
With these definitions we note that
$$
Q_z^{-1} \widehat T^z_n Q_z \,=\,T^z\,+\,W^z_n\;.
$$
Similarly to the products $T^z_{m,n}$ we define
$$
\widehat T^z_{m,n}\,=\,\widehat T^z_n \widehat T^z_{n-1} \cdots \widehat T^z_{m+1} \widehat T^z_m\;.
$$
and
\begin{equation}
\Xx^{z}_{m,n}\,=\, Q_z^{-1} \widehat T^z_{m,n} Q_z
\end{equation}
Using the splitting into blocks of sizes $l_0=2l_e+l_h$ and $l_1=l_h$ we write
\begin{equation}
\Xx^z_{m,n}\,=\,\pmat{A^z_{m,n} & B^z_{m,n} \\ C^z_{m,n} & D^z_{m,n} }
\end{equation}
and we define the Schur complements
\begin{equation}
X^z_{m,n}\,=\,A^z_{m,n}\,-\, B^z_{m,n} \left(D^z_{m,n}\right)^{-1}  C^z_{m,n}\qtx{and}
Z^z_{m,n}\,=\,B^z_{m,n} \left(D^z_{m,n}\right)^{-1} \;.
\end{equation}
The reason that we will work with the products from some $m$ on is that for large $n\geq m$ and some random $m$, we will have that $V_n=\widehat V_n$.
More precise probabilistic arguments will be given later.

First,  we need to check that the matrices $W^z_n$ do indeed satisfy the bounds we need:

\begin{proposition}
There exists $\Cc_W<\infty$ (depending on the chosen compact interval $[a,b]\subset \Sigma$ and the chosen $c >0$) such that for all $z=\lambda +i\eta \in [a,b]+i[-c,c]$ we have
$$
\sum_{n=0}^\infty \left( \left\| \EE\left(W^z_n\right) \right\|\,+\,\EE\left( \|W^z_n \|^2\right)\,\right)\,\leq\,\Cc_W\;.
$$
\end{proposition}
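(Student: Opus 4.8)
The plan is to bound the two sums separately and uniformly in $z=\lambda+i\eta\in[a,b]+i[-c,c]$, the whole point being that all constants entering the estimates ($\Cc_Q$ and $\gamma$) were fixed uniformly over this compact set in Section~\ref{sec:3-channels}. The second-moment part is immediate: from \eqref{eq-V-Vv} and the truncation we have the pointwise bound $\|W^z_n\|\leq\|\Vv^z_n\|\leq\Cc_Q^2\|V_n\|$, hence $\EE(\|W^z_n\|^2)\leq\Cc_Q^4\,\EE(\|V_n\|^2)$, and summing over $n$ and using assumption \eqref{eq-cond-V} gives $\sum_n\EE(\|W^z_n\|^2)\leq\Cc_Q^4\sum_n\EE(\|V_n\|^2)<\infty$.

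The actual work is in controlling $\sum_n\|\EE(W^z_n)\|$, since the truncation indicator prevents us from directly replacing $W^z_n$ by $\Vv^z_n$ and then by $\EE(V_n)$. Write $\delta=(e^{2\gamma}-e^\gamma)/(4\Cc_Q^2)>0$ for the truncation threshold, so that $W^z_n=\Vv^z_n\,1_{\|V_n\|<\delta}$ and therefore
\begin{equation*}
\EE(W^z_n)\,=\,\EE(\Vv^z_n)\,-\,\EE\big(\Vv^z_n\,1_{\|V_n\|\geq\delta}\big)\;.
\end{equation*}
For the first term I would use the conjugation identity from \eqref{eq-T^l}, $\EE(\Vv^z_n)=Q_z^{-1}\pmat{\EE(V_n)&0\\0&0}Q_z$, together with $\|Q_z^{\pm1}\|<\Cc_Q$, to get $\|\EE(\Vv^z_n)\|\leq\Cc_Q^2\,\|\EE(V_n)\|$. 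For the truncation remainder, the key (elementary) observation is the pointwise inequality $\|V_n\|\,1_{\|V_n\|\geq\delta}\leq\delta^{-1}\|V_n\|^2$, which yields
\begin{equation*}
\big\|\EE\big(\Vv^z_n\,1_{\|V_n\|\geq\delta}\big)\big\|\,\leq\,\EE\big(\|\Vv^z_n\|\,1_{\|V_n\|\geq\delta}\big)\,\leq\,\Cc_Q^2\,\EE\big(\|V_n\|\,1_{\|V_n\|\geq\delta}\big)\,\leq\,\frac{\Cc_Q^2}{\delta}\,\EE(\|V_n\|^2)\;.
\end{equation*}
Combining, $\|\EE(W^z_n)\|\leq\Cc_Q^2\|\EE(V_n)\|+\tfrac{\Cc_Q^2}{\delta}\EE(\|V_n\|^2)$.

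Finally I would sum over $n$ and invoke \eqref{eq-cond-V}:
\begin{equation*}
\sum_{n=0}^\infty\Big(\|\EE(W^z_n)\|+\EE(\|W^z_n\|^2)\Big)\,\leq\,\Big(\Cc_Q^2+\frac{\Cc_Q^2}{\delta}+\Cc_Q^4\Big)\sum_{n=0}^\infty\Big(\|\EE(V_n)\|+\EE(\|V_n\|^2)\Big)\,=:\,\Cc_W\,<\,\infty\;,
\end{equation*}
and since $\Cc_Q$, $\gamma$ (hence $\delta$) were chosen uniformly for $\lambda\in[a,b]$, $|\eta|\leq c$, this $\Cc_W$ does not depend on $z$, which is exactly the claim. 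I do not expect a genuine obstacle here: the only step beyond a one-line estimate is the handling of the truncated tail $\EE(\|V_n\|\,1_{\|V_n\|\geq\delta})$, and the pointwise inequality $\|V_n\|\,1_{\|V_n\|\geq\delta}\leq\delta^{-1}\|V_n\|^2$ reduces it to the already-assumed summability of $\EE(\|V_n\|^2)$.
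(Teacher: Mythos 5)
Your proof is correct and follows the paper's argument in all essentials: the same pointwise bound $\|W^z_n\|\leq \Cc_Q^2\|V_n\|$ for the second moments, and the same decomposition $\EE(W^z_n)=\EE(\Vv^z_n)-\EE(\Vv^z_n 1_{\|V_n\|\geq\delta})$ with $\|\EE(\Vv^z_n)\|\leq \Cc_Q^2\|\EE(V_n)\|$ for the first moments. The only divergence is in bounding the truncation remainder: the paper applies Cauchy--Schwarz in $L^2(\Omega)$ followed by Chebyshev's inequality, whereas you use the pointwise inequality $\|V_n\|\,1_{\|V_n\|\geq\delta}\leq \delta^{-1}\|V_n\|^2$; both routes land on exactly the same constant $4\Cc_Q^4/(e^{2\gamma}-e^\gamma)$ in front of $\EE(\|V_n\|^2)$, and yours is slightly more elementary.
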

\begin{proof}
First we note
$$
\sum_{n=0}^\infty \EE\left(\| W^z_n\|^2\right)\,\leq\,
\sum_{n=0}^\infty \EE\left( \|\Vv^z_n \|^2\right)\,\leq\,
\Cc_Q^4\,\sum_{n=0}^\infty \EE\left( \left\|V_n \right\|^2\right)\,=\,\Cc'\,<\,\infty\;.
$$
uniformly for $z \in [a,b]+i[-c,c]\subset \CC$,  which bounds the second term as needed.

Using the Cauchy-Schwartz Inequality in $L^2(\Omega,\Ff,\bP)$ we find
\begin{align*}
\EE\left(\left\|W^z_n -\Vv^z_n\right\|\right)\,&=\,
\EE\left(\left\|\Vv^z_n \right\| \cdot 1_{\|V_n\|\geq\frac{e^{2\gamma}-e^\gamma}{4\Cc_Q^2}}\right) \\
&\leq\, \sqrt{\EE\left(\left\|\Vv^z_n \right\|^2\right)}\, \sqrt{\EE \left( 1_{\|V_n\|\geq \frac{e^{2\gamma}-e^\gamma}{4\Cc_Q^2}}\right)}\;.
\end{align*}
For the first term we use \eqref{eq-V-Vv},  for the second term, we use Chebyshev's inequality
$$
\EE \left( 1_{\|V_n\|\geq\frac{e^{2\gamma}-e^\gamma}{4\Cc_Q^2}}\right)\,=\,
\bP\left(\|V_n\|\geq \frac{e^{2\gamma}-e^\gamma}{4\Cc_Q^2}\right)\,\leq\, \frac{16\,\Cc_Q^4}{(e^{2\gamma}-e^\gamma)^2} \,\EE(\|V_n\|^2)\;
$$
in order to get
\begin{equation}
\EE\left(\left\|W^\lambda_n -\Vv^\lambda_n\right\|\right)\,\leq\, \frac{4\,\Cc_Q^4}{e^{2\gamma}-e^\gamma}\, \EE(\|V_n\|^2)
\end{equation}
for all $z \in [a,b]+i[-c,c]\subset \CC$.
Thus,
$$
\left\|\EE\left( W^z_n \right)\right\|\,\leq\,
\left\|\EE\left( \Vv^z_n \right)\right\|\,+\,
\EE\left(\| W^z_n-\Vv^z_n\| \right) \,\leq\,
\Cc_Q \left\| \EE\left( V_n \right)\right\|\,+\,\frac{4\,\Cc_Q^4}{e^{2\gamma}-e^\gamma} \EE(\|V_n\|^2)
$$
which leads to
$$
\sum_{n=0}^\infty \left\|\EE\left( W^\lambda_n \right)\right\|\,\leq\,
\sum_{n=0}^\infty \left( \Cc_Q \left\| \EE\left( V_n \right)\right\|\,+\,\frac{4\,\Cc_Q^4}{e^{2\gamma}-e^\gamma} \EE(\|V_n\|^2) \right)\,=\,\Cc''<\infty.
$$
Now $\Cc_W=\Cc'+\Cc''$ does the job.
\end{proof}

Thus, we can apply the results from Section~\ref{sec:3-key-est}.

\noindent 
\begin{proposition}\label{prop-limits+est}
Let $\Omega'=\{\omega\,:\, \lim_{n\to \infty} V_n(\omega)=0\}$ which satisfies $\bP(\Omega')=1$.  

\begin{enumerate}
\item[{\rm (i)}] 
For all $\omega \in \Omega'$,  all $m\in\ZZ_+$,  and 
for all $z \in [a,b]+i[-c,c]$ we have, 
$$
\lim_{n\to\infty}  Z^z_{m,n}=0\;, \quad \lim_{n \to \infty} (D^z_{m,n})^{-1}= 0\;, \qtx{and}
Y_m^z\,:=\, \lim_{n\to \infty} (D^z_{m,n})^{-1} C^z_{m,n} \quad \text{exists.}
$$
\item[{\rm (ii)}] For all $\omega \in \Omega'$,  all $m\in \ZZ_+$
$$
z \mapsto Y_m^\lambda \quad\text{is analytic for $z=\lambda+i\eta \in (a,b)+i(-c, c)$} 
$$
and,  uniformly in $z=\lambda+i\eta \in [a,b]+i[-c,c]$ we find
$$
\lim_{m\to\infty} Y^z_m\,=\,0\;.
$$
\item[{\rm (iii)}] We have for all $\omega \in \Omega$,  and $z=\lambda+i\eta \in [a,b] +i [-c,c]$ that
$\|Z^z_{m,n}\| \leq 1\;.$
\item[{\rm (iv)}] We find $\Cc>0$ such that (uniformly) for all $\lambda \in [a,b]$ and all $m\in \ZZ_+$ 
$$
\sup_{n\geq m} \EE(\|X^\lambda_{m,n}\|^4)\,\leq\,\Cc<\,\infty\;.
$$
\end{enumerate}
\end{proposition}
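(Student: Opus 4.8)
The plan is to recognise, for each fixed $m\in\ZZ_+$ and each fixed $z=\lambda+i\eta\in[a,b]+i[-c,c]$, the sequence $(\Xx^z_{m,n})_{n\ge m}$ as precisely the Markov process of Section~\ref{sec:3-key-est} run on the shifted data $T_j:=T^z+W^z_{m+j}$, $j=0,1,2,\dots$, and then to quote Propositions~\ref{prop-boundZ}, \ref{prop-limits} and \ref{prop-boundX} directly. Indeed, since $Q_z^{-1}\widehat T^z_k Q_z=T^z+W^z_k$, one has $\Xx^z_{m,n}=(T^z+W^z_n)\cdots(T^z+W^z_m)$, i.e.\ the process $\Xx_0=I$, $\Xx_{j+1}=(T^z+W^z_{m+j})\Xx_j$; with the block splitting $l_0=2l_e+l_h$, $l_1=l_h$ the structural form \eqref{eq-setup-T} holds with $S=\diag(e^{-iK},e^{iK},\Gamma^{-1})$, the bounds \eqref{eq:cond1} hold on the whole strip by \eqref{eq-Gamma-estimate} (so $\|e^{\pm iK}\|\le e^\gamma$, $\|\Gamma^{-1}\|\le e^{-2\gamma}$), the pointwise bound $\|W^z_n(\omega)\|<(e^{2\gamma}-e^\gamma)/4$ holds for \emph{every} $\omega\in\Omega$ by the truncation defining $W^z_n$, and the summability in \eqref{eq:cond2} for the shifted sequence is dominated by the full-series constant $\Cc_W$ of the preceding proposition, hence holds with a constant independent of $m$. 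Since the proof of Proposition~\ref{prop-boundZ} uses only \eqref{eq:cond1} and the pointwise part of \eqref{eq:cond2}, its induction runs pathwise and yields (iii) for every $\omega\in\Omega$. On $\Omega'$ — which has full probability because $\sum_n\EE(\|V_n\|^2)<\infty$ forces $\sum_n\|V_n\|^2<\infty$ and hence $V_n\to0$ almost surely — one additionally has $W^z_{m+j}\to0$ as $j\to\infty$ (since $V_n\to0$ implies $\widehat V_n=V_n$, so $W^z_n=\Vv^z_n\to0$, for large $n$, uniformly in $z$), so Proposition~\ref{prop-limits} applies and gives (i).

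Part (iv) follows from Proposition~\ref{prop-boundX} applied to the same shifted process, now at a \emph{real} spectral parameter $\lambda\in[a,b]$, where $\|e^{\pm iK(\lambda)}\|=1$ and $\|\Gamma^{-1}(\lambda)\|\le e^{-2\gamma}\le1$, so $\|S\|=1$ and the extra hypothesis $\|S\|\le1$ is met. This gives $\sup_{n\ge m}\EE(\|X^\lambda_{m,n}\|^4)\le\Cc_{W,\gamma}$; since $\Cc_{W,\gamma}$ depends only on the fixed $\gamma=\gamma([a,b])$ and monotonically on the summability constant, which we bound uniformly in $m$ by the full $\Cc_W$, this is a single constant $\Cc$ valid for all $\lambda\in[a,b]$ and all $m$.

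The only statement not immediately covered by Section~\ref{sec:3-key-est} is (ii). For analyticity, observe that $z\mapsto Q_z$, $z\mapsto Q_z^{-1}$ (whose entries in \eqref{eq-Q^-1} involve only the functions $(e^{iK}-e^{-iK})^{-1}$ and $(\Gamma^{-1}-\Gamma)^{-1}$, holomorphic on $(a,b)+i(-c,c)$ by the choice of $c$) and $z\mapsto\widehat T^z_n$ are analytic on the open strip, hence so is each $\Xx^z_{m,n}$; by Proposition~\ref{prop-boundZ} the block $D^z_{m,n}$ is invertible there, so every pre-limit $(D^z_{m,n})^{-1}C^z_{m,n}$ is analytic. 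It then suffices to show $(D^z_{m,n})^{-1}C^z_{m,n}\to Y^z_m$ locally uniformly in $z$, after which analyticity of $z\mapsto Y^z_m$ follows from the Weierstrass convergence theorem. For this I would reuse the telescoping identity for $(D^z_{m,n})^{-1}C^z_{m,n}$ from the proof of Proposition~\ref{prop-limits} together with the bounds $\|(D^z_{m,n})^{-1}\|\le(2/(e^{2\gamma}+e^\gamma))^{n-m+1}$, $\|X^z_{m,n}\|\le\prod_{i=m}^{n}(e^\gamma+2\|W^z_i\|)$ and $\|W^z_n\|\le\Cc_Q^2\|V_n(\omega)\|$, all uniform in $z$ on the closed strip. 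Fixing $\omega\in\Omega'$, pick $\varepsilon>0$ with $r:=2(e^\gamma+2\varepsilon)/(e^{2\gamma}+e^\gamma)<1$ and $N_0$ with $\Cc_Q^2\|V_n(\omega)\|<\varepsilon$ for $n>N_0$; then for $m>N_0$ the summand indexed by shell $j$ is bounded by a constant times $r^{\,j-m}\sup_{k\ge m}\|V_k(\omega)\|$, which is summable uniformly in $z$ (giving local uniform convergence, hence analyticity of $z\mapsto Y^z_m$; for $m\le N_0$ one absorbs $\prod_{i=m}^{N_0}(e^\gamma+2\|W^z_i\|)$ as an $\omega$-dependent but $z$-uniform prefactor and the tail still decays geometrically) and whose sum tends to $0$ as $m\to\infty$ uniformly in $z$ (giving $\lim_{m\to\infty}Y^z_m=0$).

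The only genuinely non-mechanical step is the bookkeeping in (ii): verifying that every rate constant can be taken uniform in $z$ across the strip and that the tail sums $\sup_{k\ge m}\|V_k(\omega)\|$ drive the convergence to $0$ uniformly in $z$ as $m\to\infty$. Parts (i), (iii) and (iv) are direct transcriptions of the corresponding propositions of Section~\ref{sec:3-key-est} to the shifted process.
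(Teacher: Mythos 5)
Your proposal is correct and follows essentially the same route as the paper: parts (i), (iii), (iv) are read off from Propositions~\ref{prop-limits}, \ref{prop-boundZ} and \ref{prop-boundX} applied to the shifted process, and part (ii) is obtained from analyticity of the pre-limits $(D^z_{m,n})^{-1}C^z_{m,n}$ plus $z$-uniform convergence of the telescoping series $\sum_k D^z_{m,k+1}{}^{-1}c_k X^z_{m,k}$, with the tail bound $\|Y^z_m\|\lesssim \sup_{k\ge m}\|V_k(\omega)\|$ giving the uniform vanishing as $m\to\infty$. The only difference is that you make explicit the geometric rates and the $m$-uniformity of the constants, which the paper leaves implicit.
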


\begin{proof}  For part (i) note that with probability 1,  $\|V_n\| \to 0$ for $n \to \infty$.
We let $\Omega' \subset \Omega$ be the set of probability one where $V_n=V_n(\omega) \to 0$.
Then we have the same for $W^z_n$ and the limits follow from Proposition~\ref{prop-limits}.

For part (ii) note first that $(D_{m,n}^z)^{-1} C_{m,n}^z$ is analytic in $z \in [a,b]+i[-c,c]$.
Now,  for $\omega \in \Omega'$ fixed,  one sees from the estimates in Proposition~\ref{prop-limits} that the convergence of the series $\sum_{n>m} [(D_{m,n+1}^z)^{-1} C_{m,n+1}^z-(D_{m,n}^z)^{-1} C_{m,n}^z]$ is uniform for $z \in [a,b]+i[-c, c]$.  Hence,  the limiting function is analytic in $z$.
Moreover,  if for all $n>m$ we have  $\|V_n\|<\varepsilon$ then one sees that with a uniform constant $\Cc_Y<\infty$,  we have $\|Y^z_m\|<\varepsilon \Cc_Y$.
As $V_n \to 0$ for $\omega \in \Omega'$,  we find $\varepsilon$ arbitrarily small as $m\to \infty$ and hence $\lim_{m\to \infty} Y^z_m=0$ uniformly in $z$.

Part (iii) simply follows from Proposition~\ref{prop-boundZ} and part (iv) from Proposition~\ref{prop-boundX},  noting that all bounds are uniform for $\lambda\in [a,b]$ and $\|S\|\leq 1$ for $\lambda \in [a,b]$.
\end{proof}

\begin{proposition}\label{prop-est-schur}
There is a set of probability one,  $\tilde \Omega\subset \Omega$,  $\bP(\tilde \Omega)=1$,  such that for any $\omega \in \tilde \Omega$ and any $m\in \ZZ_+$ we find
$$
\liminf_{n\to \infty} \int_a^b \| X^\lambda_{m,n}\,\|^4\, {\rm d}\lambda\,< \, \infty\;.$$
\end{proposition}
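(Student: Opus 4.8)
The plan is to combine the uniform-in-$\lambda$ fourth-moment bound of Proposition~\ref{prop-limits+est}(iv) with Fatou's lemma and a Borel--Cantelli / Markov argument to extract an almost sure subsequential bound. First I would note that for each fixed $n\geq m$, the map $\omega \mapsto \int_a^b \|X^\lambda_{m,n}(\omega)\|^4\,{\rm d}\lambda$ is a nonnegative random variable (measurability in $(\lambda,\omega)$ is clear since $X^\lambda_{m,n}$ is built from finitely many transfer matrices depending continuously on $\lambda$ and measurably on $\omega$). By Tonelli's theorem and Proposition~\ref{prop-limits+est}(iv),
$$
\EE\!\left( \int_a^b \|X^\lambda_{m,n}\|^4\,{\rm d}\lambda \right)\,=\,\int_a^b \EE\!\left( \|X^\lambda_{m,n}\|^4\right)\,{\rm d}\lambda\,\leq\,(b-a)\,\Cc\,<\,\infty\;,
$$
with $\Cc$ independent of $n$ and $m$.

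Next, fixing $m$, I would apply Fatou's lemma to the sequence $n\mapsto \int_a^b \|X^\lambda_{m,n}\|^4\,{\rm d}\lambda$: since
$$
\EE\!\left( \liminf_{n\to\infty} \int_a^b \|X^\lambda_{m,n}\|^4\,{\rm d}\lambda \right)\,\leq\, \liminf_{n\to\infty} \EE\!\left( \int_a^b \|X^\lambda_{m,n}\|^4\,{\rm d}\lambda \right)\,\leq\,(b-a)\,\Cc\,<\,\infty\;,
$$
the random variable $\liminf_{n\to\infty} \int_a^b \|X^\lambda_{m,n}\|^4\,{\rm d}\lambda$ is almost surely finite. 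Hence there is a set $\tilde\Omega_m$ of probability one on which this $\liminf$ is finite. Taking $\tilde\Omega\,=\,\bigcap_{m\in\ZZ_+}\tilde\Omega_m$, a countable intersection of sets of full measure, gives $\bP(\tilde\Omega)=1$ and the desired property for every $m\in\ZZ_+$ simultaneously.

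The only genuinely delicate point is the joint measurability needed to apply Tonelli and to make sense of the $\liminf$ as a random variable; this is routine given that $\widehat T^z_n$ and $Q_z$ are jointly continuous in $z$ and measurable in $\omega$, so $X^\lambda_{m,n}$ is a Carath\'eodory function (continuous in $\lambda$, measurable in $\omega$), and the finitely-many-factor structure of $\Xx^z_{m,n}$ together with invertibility of $D^z_{m,n}$ (Proposition~\ref{prop-boundZ}) ensures no measurability pathologies. I would also remark that one should intersect $\tilde\Omega$ with the full-measure set $\Omega'$ from Proposition~\ref{prop-limits+est} if the later arguments need both the $\liminf$ bound and the existence of the limits $Y^\lambda_m$; this costs nothing since it is again a countable intersection of full-measure sets.
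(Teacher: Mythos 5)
Your proposal is correct and takes essentially the same route as the paper: the uniform fourth-moment bound of Proposition~\ref{prop-limits+est}(iv) combined with Fubini/Tonelli and Fatou's lemma gives $\EE\bigl(\liminf_n \int_a^b \|X^\lambda_{m,n}\|^4\,{\rm d}\lambda\bigr) \leq \Cc(b-a) < \infty$, hence almost sure finiteness of the liminf. Your additional remarks on joint measurability and on intersecting the full-measure sets over $m\in\ZZ_+$ make explicit what the paper leaves implicit, but do not change the argument.
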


\begin{proof}
By Proposition~\ref{prop-limits+est}~(iv),  Fatou lemma and Fubini theorem we find
$$\EE \,\liminf_{n\to \infty} \int_a^b (||X^z_{m,n}||^4) d\lambda \leq\liminf_{n\to \infty}  \int_a^b  \EE ( ||X^z_{m,n}||^4) d\lambda \leq \Cc(b-a)\,<\,\infty $$
Hence,  $\bP\left(\liminf_{n\to\infty} \int_a^b \|X^z_{m,n}\| {\rm d}\lambda = \infty\right)=0$.
\end{proof}

\section{Absolutely continuous spectrum}

In this section we finally prove Theorem~\ref{th:main}.
Recall in Proposition~\ref{prop-limits+est} we defined the set $\Omega'$ of probability one,  where $V_n \to 0$.
For $\omega \in \Omega'$ we find $m$ such that for $n\geq m$  and all $\lambda \in [a,b]+i[-c,c]$ we have $\Vv^z_n=W^z_n$.  
However,  the $m$ is random and not uniform in $\omega$.  Therefore,  
we define the events
$$
\Omega_m\,=\,\left\{\omega \in \Omega' \,:\,\left(  \forall n\geq m\,:\, \|V_n(\omega)\|<\frac{e^{2\gamma}-e^\gamma}{4\Cc_Q^2}\;\right)\;\right\}
$$
For $\omega \in \Omega_m$, $z=\lambda+i\eta\in[a,b]+i[-c,c]$ and $n\geq m$  we find  $\widehat T^z_n=T^z_n$ and,  hence, 
$$
T^z_{m,n}(\omega)\,=\, \widehat T^z_{m,n}(\omega)\,=\,  Q_z \Xx^z_{m,n}(\omega) Q_z^{-1} \;.
$$
Moreover,
$$
\bP\left( \bigcup_{m=0}^\infty \Omega_m\right)\,=\,\bP\left( \Omega'  \right)\,=\,1.
$$

The main work left to do now is to use Proposition~\ref{prop-est-schur} to obtain an estimate of the form as needed in Theorem~\ref{th-ac-criterion} which is a special case of \cite[Theorem~4]{Sadel2019}.
Thus,  given a vector $\vx \in \CC^{l}$ associated to some vector in the 0-th shell, we need to find vectors $\vec{u}_{\lambda,n} \in \CC^l$ such that
$$
\liminf_{n\to\infty} \int_a^b \left\|T^\lambda_{0,n} \smat{\vec{u}_{\lambda,n} \\ \vx} \right\|^4\,{\rm d}\lambda\,<\, \infty\;.
$$
Note that for $\omega \in \Omega_m$ one has
$$
T^z_{0,n} \pmat{\vec{u}_{z,n} \\ \vx} \,=\, Q_z \pmat{A^z_{m,n} & B^z_{m,n} \\ C^z_{m,n} & D^z_{m,n}}
Q_z^{-1} T^z_{0,m-1}  \pmat{\vec{u}_{z,n} \\ \vx}\;.
$$

\begin{lemma}\label{lem:u-y}
For $Y \in \CC^{l_h \times (l+l_e)}$ assume 
\begin{equation}\label{cond:rank}
{\rm rank}\,\left[  \pmat{Y & I_{l_h} } Q_\lambda^{-1} T^\lambda_{0,m-1} \pmat{I_l \\ 0} \right]\,=\,l_h\;,
\end{equation}
then,  for any $\vx$ one finds $\vec{u}_{\lambda,Y} \in \CC^l$, $\vec{y}_Y \in \CC^{l+l_e}$ such that
\begin{equation}\label{eq-u-y-general}
Q_\lambda^{-1} T^\lambda_{0,m-1} \pmat{\vec{u}_{\lambda,Y} \\ \vx}\,=\,
\pmat{\vec{y}_{\lambda,Y} \\ -Y \, \vec{y}_{\lambda,Y} }\;.
\end{equation}
If the condition \eqref{cond:rank} is fulfilled for specific $\lambda=\lambda_0$ and $Y=Y_0$,  then it is fulfilled in a neighborhood of $(\lambda_0,Y_0)$. Moreover,  given a fixed vector $\vx$,  one may get solutions $\vec{u}_Y$ and $\vec{y}_Y$ that depend continuously on $(\lambda,Y)$ in a neighborhood of $(\lambda_0,Y_0)$. 

Now let $\omega \in \Omega_m$ and use $Y=(D^\lambda_{m,n})^{-1} C^\lambda_{m,n}$ and denote  $\vec{u}_{\lambda,Y}$, $\vec{y}_{\lambda,Y}$ by$\vec{u}_{\lambda,n}$ and $\vec{y}_{\lambda,n}$. Hence, 
\begin{equation}\label{eq-u-y}
T^\lambda_{0,n} \pmat{\vec{u}_{\lambda,n} \\ \vx}\,=\,Q_\lambda \pmat{ X^\lambda_{m,n} \;\vec{y}_{\lambda,n} \\ 0}\;.
\end{equation}
\end{lemma}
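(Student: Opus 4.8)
The plan is to reduce Lemma~\ref{lem:u-y} to elementary linear algebra on $\CC^{2l}$, using throughout the block splitting of $\CC^{2l}$ into sizes $l_0=2l_e+l_h=l+l_e$ and $l_1=l_h$. Write $M=M(\lambda)=Q_\lambda^{-1}T^\lambda_{0,m-1}$, an invertible $2l\times 2l$ matrix whose entries are analytic in $\lambda$, and split it columnwise as $M=[\,M_1\mid M_2\,]$ with $M_1,M_2\in\CC^{2l\times l}$, so that $M\pmat{\vec{u}\\ \vx}=M_1\vec{u}+M_2\vx$ and $M_1=Q_\lambda^{-1}T^\lambda_{0,m-1}\pmat{I_l\\0}$. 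Put $N_Y=\pmat{I_{l+l_e}\\ -Y}\in\CC^{2l\times(l+l_e)}$. Two observations drive everything: (a) the block row $\pmat{Y & I_{l_h}}\in\CC^{l_h\times 2l}$ has full row rank $l_h$ for \emph{every} $Y$ (because of the $I_{l_h}$ block), it satisfies $\pmat{Y & I_{l_h}}N_Y=0$, and $\dim\ker\pmat{Y & I_{l_h}}=2l-l_h=l+l_e=\rank N_Y$, whence $\ran N_Y=\ker\pmat{Y & I_{l_h}}$; (b) since $Q_\lambda^{-1}T^\lambda_{0,m-1}\pmat{I_l\\0}=M_1$, the hypothesis \eqref{cond:rank} says exactly that $P:=\pmat{Y & I_{l_h}}M_1\colon\CC^l\to\CC^{l_h}$ is onto.

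First I would construct the solutions. Given $\vx$, use surjectivity of $P$ to pick $\vec{u}_{\lambda,Y}\in\CC^l$ with $P\,\vec{u}_{\lambda,Y}=-\pmat{Y & I_{l_h}}M_2\vx$; then $\pmat{Y & I_{l_h}}\bigl(M_1\vec{u}_{\lambda,Y}+M_2\vx\bigr)=0$, i.e.\ $M\pmat{\vec{u}_{\lambda,Y}\\ \vx}\in\ker\pmat{Y & I_{l_h}}=\ran N_Y$ by (a), so there is a \emph{unique} $\vec{y}_{\lambda,Y}\in\CC^{l+l_e}$ with $N_Y\,\vec{y}_{\lambda,Y}=M\pmat{\vec{u}_{\lambda,Y}\\ \vx}$. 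This is precisely \eqref{eq-u-y-general}, and since the top block of $N_Y$ is the identity, $\vec{y}_{\lambda,Y}=\pmat{I_{l+l_e} & 0}M\pmat{\vec{u}_{\lambda,Y}\\ \vx}$.

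For the openness and continuity assertions: the map $(\lambda,Y)\mapsto P=\pmat{Y & I_{l_h}}Q_\lambda^{-1}T^\lambda_{0,m-1}\pmat{I_l\\0}$ has entries analytic in $\lambda$ and polynomial in the entries of $Y$, and ``$\rank P=l_h$'' is the open condition that some $l_h\times l_h$ minor of $P$ be nonzero; hence if it holds at $(\lambda_0,Y_0)$ it holds on a neighbourhood of $(\lambda_0,Y_0)$. Shrinking that neighbourhood I fix one column index set $J$ with $|J|=l_h$ for which the submatrix $P_J$ is invertible throughout, and take $\vec{u}_{\lambda,Y}$ supported on $J$ with $(\vec{u}_{\lambda,Y})_J=-P_J^{-1}\pmat{Y & I_{l_h}}M_2\vx$; for fixed $\vx$ this depends analytically, in particular continuously, on $(\lambda,Y)$, and then $\vec{y}_{\lambda,Y}=\pmat{I_{l+l_e} & 0}M\pmat{\vec{u}_{\lambda,Y}\\ \vx}$ inherits the same dependence.

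Finally, to obtain \eqref{eq-u-y}, let $\omega\in\Omega_m$ and $n\ge m$; then $Y^\lambda_{m,n}:=(D^\lambda_{m,n})^{-1}C^\lambda_{m,n}$ is well defined (Propositions~\ref{prop-boundZ} and \ref{prop-limits+est}), and applying the above with $Y=Y^\lambda_{m,n}$ (granting \eqref{cond:rank} for this $Y$) gives $\vec{u}_{\lambda,n}:=\vec{u}_{\lambda,Y^\lambda_{m,n}}$ and $\vec{y}_{\lambda,n}:=\vec{y}_{\lambda,Y^\lambda_{m,n}}$ with $Q_\lambda^{-1}T^\lambda_{0,m-1}\pmat{\vec{u}_{\lambda,n}\\ \vx}=\pmat{\vec{y}_{\lambda,n}\\ -Y^\lambda_{m,n}\vec{y}_{\lambda,n}}$. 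Using $T^\lambda_{0,n}=T^\lambda_{m,n}T^\lambda_{0,m-1}$ and the relation $T^\lambda_{m,n}=Q_\lambda\Xx^\lambda_{m,n}Q_\lambda^{-1}$ valid on $\Omega_m$,
$$
T^\lambda_{0,n}\pmat{\vec{u}_{\lambda,n}\\ \vx}\,=\,Q_\lambda\,\Xx^\lambda_{m,n}\,Q_\lambda^{-1}T^\lambda_{0,m-1}\pmat{\vec{u}_{\lambda,n}\\ \vx}\,=\,Q_\lambda\,\Xx^\lambda_{m,n}\pmat{\vec{y}_{\lambda,n}\\ -Y^\lambda_{m,n}\vec{y}_{\lambda,n}}\;.
$$
Writing $\Xx^\lambda_{m,n}=\pmat{A^\lambda_{m,n} & B^\lambda_{m,n}\\ C^\lambda_{m,n} & D^\lambda_{m,n}}$ and using $D^\lambda_{m,n}Y^\lambda_{m,n}=C^\lambda_{m,n}$ together with $X^\lambda_{m,n}=A^\lambda_{m,n}-B^\lambda_{m,n}Y^\lambda_{m,n}$, the lower block of $\Xx^\lambda_{m,n}\pmat{\vec{y}_{\lambda,n}\\ -Y^\lambda_{m,n}\vec{y}_{\lambda,n}}$ equals $(C^\lambda_{m,n}-D^\lambda_{m,n}Y^\lambda_{m,n})\vec{y}_{\lambda,n}=0$ and the upper block equals $X^\lambda_{m,n}\vec{y}_{\lambda,n}$, which is \eqref{eq-u-y}. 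I do not expect a real obstacle: the single idea is observation (a), that $\{\pmat{\vec{y}\\ -Y\vec{y}}:\vec{y}\}=\ker\pmat{Y & I_{l_h}}$, which simultaneously makes \eqref{cond:rank} the right solvability criterion and --- because $Y^\lambda_{m,n}=(D^\lambda_{m,n})^{-1}C^\lambda_{m,n}$ --- forces $\Xx^\lambda_{m,n}$ to collapse this affine set onto the first $l+l_e$ coordinates via the Schur complement $X^\lambda_{m,n}$; the rest is just keeping the three block sizes $l$, $l+l_e$, $l_h$ straight.
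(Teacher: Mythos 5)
Your proof is correct and follows essentially the same route as the paper: the rank condition \eqref{cond:rank} is read as surjectivity of $\vec{u}\mapsto\pmat{Y & I_{l_h}}Q_\lambda^{-1}T^\lambda_{0,m-1}\smat{\vec{u}\\0}$ (your $P$ is the paper's $Y\aaa+\ccc$), a locally invertible $l_h\times l_h$ submatrix gives the continuous solution, and the Schur-complement computation collapses the lower block. Your column-selection matrix is just a special case of the paper's auxiliary matrix $M$, and your explicit verification that the lower block $(C^\lambda_{m,n}-D^\lambda_{m,n}Y^\lambda_{m,n})\vec{y}_{\lambda,n}$ vanishes is exactly the step the paper leaves as ``follows directly.''
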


\begin{proof}
Using \eqref{eq-u-y-general} in the decomposition of $T^z_{0,n}$ above,  with $z=\lambda$ and $Y=(D^\lambda_{m,n})^{-1} C^\lambda_{m,n}$,  the statement \eqref{eq-u-y} follows directly.
Thus, we need to check that we find $\vec{u}_{\lambda,Y}$ and $\vec{y}_{\lambda,Y}$ such that \eqref{eq-u-y-general} is satisfied.
Dividing the $2l \times 2l$ matrix $Q_\lambda^{-1} T^\lambda_{0,m-1}$ horizontally in blocks of sizes $l$ and $l$, and vertically into blocks of sizes $l+l_e$ and $l_h$ we may write
$$
Q_\lambda^{-1}T^\lambda_{0,m-1} = \pmat{\aaa & \bbb \\ \ccc & \ddd} \qtx{and}
Q_\lambda^{-1}T^\lambda_{0,m-1}  \pmat{\vec{u}_{\lambda,n} \\ \vx}\,=\,
\pmat{\aaa \vec{u}_{\lambda_n} + \bbb \vx \\ \ccc \vec{u}_{\lambda,n} + \ddd \vx}
$$
Note $\aaa, \bbb \in \CC^{(l+l_e) \times l},\,\ccc, \ddd \in \CC^{l_h \times l}$\;.
Then,  \eqref{eq-u-y-general} is satisfied for $\vec{y}_{\lambda,Y}=\aaa \vec{u}_{\lambda,Y} + \bbb \vx$ if and only if
$$
\ccc \vec{u}_{\lambda,Y} + \ddd \vx\,=\, - Y \,\left( \aaa \vec{u}_{\lambda,Y} + \bbb \vx \right)
$$
This is equivalent to
$$\left(Y \, \aaa+\ccc\right) \vec{u}_{\lambda,Y}\,=\,
\left(-Y \bbb - \ddd \right)\,\vx\;.
$$
Thus,  we find a solution $\vec{u}_{\lambda,Y}$ for any $\vx$, if $Y \, \aaa+\ccc$ is surjective (as a linear map from $\CC^l$ to $\CC^{l_h}$),
which is exactly the rank condition given in the assumption.

Note, if this is fulfilled for some specific $Y=Y_0$, and some specific spectral parameter $\lambda=\lambda_0$, 
then we find a matrix $M\in \CC^{l \times l_h}$ such that
$\det((Y_0 \, \aaa+\ccc) M)\neq 0$,. So in a neighborhood of $Y_0$ and $\lambda_0$,  this determinant is still not zero and we may use
$$
\vec{u}_Y\,=\,M[(Y\aaa+\ccc)M]^{-1} \,\left( -Y\bbb-\ddd\right)\,\vx\;
$$
and as above,  $\vec{y}_{\lambda,Y}=\aaa \vec{u}_{\lambda,Y} + \bbb \vx$.  Thus,  both
depend continuously on $(\lambda,Y)$. 
\end{proof}

In the sequel need to use the form of the transfer matrices as in \cite{Sadel2019} using the resolvent boundary data of restrictions to finite graphs.
Thus, let $H_{0,m}=H_{0,m}(\omega)$ be the restriction of $H_\omega$ to $\ell^2(\{0,\ldots,m\}) \otimes \CC^l$,  that is
$H_{0,m}(\omega)= P^* H_\omega P$ where $P:\ell^2(\{0,\ldots,m\}) \otimes \CC^l \hookrightarrow \ell^2(\ZZ_+)\otimes \CC^l$ is the natural
embedding. Note that $H_{0,m}$ is a $l(m+1) \times l(m+1)$ Hermitian matrix.
One may define the resolvent boundary data from shell 0 to $m$ as in \cite{Sadel2019} by
\begin{equation}\label{eq-boundary-data}
\pmat{\alpha^z_{0,m} & \beta^z_{0,m} \\ \gamma^z_{0,m} & \delta^z_{0,m} }=
\pmat{P_0^* \\ P_m^*} (H_{0,m}-z)^{-1} \pmat{P_0 & P_m}
\end{equation}
where $P_k$ is the natural embedding of $\ell^2(\{k\})\otimes \CC^l$ into $\ell^2(\{0,\ldots,m\})\otimes \CC^l$ and can be regarded as an $l(m+1) \times l$ matrix. 
In this sense, 
$$
P_0=\pmat{I_l \\ \nul \\ \vdots \\ \nul}, \qquad P_m= \pmat{\nul \\ \vdots \\ \nul \\ I_l}
$$
and $\alpha^z_{0,m}, \beta^z_{0,m}, \gamma^z_{0,m}$ and $\delta^z_{0,m}$ are all $l \times l$ matrices.
Then, one of the main points following from the work in \cite{Sadel2019} is the following formula, which we also prove in Appendix~\ref{app-Transfermatrix}.

\begin{proposition}\label{prop-boundary-data}(cf. Proposition~\ref{prop-app-A})
If $z$ is not an eigenvalue of $H_{0,m}$ and $\beta^z_{0,m}$ is invertible,  then
$$
T^z_{0,m} = \pmat{(\beta^z_{0,m})^{-1} & -(\beta^z_{0,m})^{-1} \alpha^z_{0,m} \\ \delta^z_{0,m} (\beta^z_{0,m})^{-1} &
\gamma^z_{0,m} - \delta^z_{0,m} (\beta^z_{0,m})^{-1} \alpha^z_{0,m} }
$$
\end{proposition}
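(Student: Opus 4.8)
The plan is to pin down $T^z_{0,m}$ by its action on the initial data of solutions of the free recursion, to manufacture such solutions from the resolvent of $H_{0,m}$, and then to match the two descriptions; inverting one simple block matrix at the end produces the stated formula.

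First I would record the following reformulation of what Section~\ref{sec:3-channels} already gives: if a sequence $(\Psi_n)_{n=-1}^{m+1}$ in $\CC^l$ satisfies the three-term recursion $\Psi_{n+1}+\Psi_{n-1}=(A+V_n-z)\Psi_n$ for all $n=0,\dots,m$, then $\smat{\Psi_{m+1}\\ \Psi_m}=T^z_{0,m}\smat{\Psi_0\\ \Psi_{-1}}$, because $T^z_{0,m}=T^z_m\cdots T^z_0$ is precisely the composition of the one-step maps $\smat{\Psi_n\\ \Psi_{n-1}}\mapsto\smat{\Psi_{n+1}\\ \Psi_n}$. It therefore suffices to feed this identity enough solutions.

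Then, for fixed $\xi,\zeta\in\CC^l$ and $z$ not an eigenvalue of $H_{0,m}$, set $\Psi:=(H_{0,m}-z)^{-1}(P_0\xi+P_m\zeta)$ and write out $(H_{0,m}-z)\Psi=P_0\xi+P_m\zeta$ shell by shell. Because $H_{0,m}$ is the Dirichlet restriction, the neighbours $\Psi_{-1}$ and $\Psi_{m+1}$ simply do not occur, and one gets the free recursion for $1\le n\le m-1$ together with $(A+V_0-z)\Psi_0-\Psi_1=\xi$ and $(A+V_m-z)\Psi_m-\Psi_{m-1}=\zeta$ at the two end shells. Now I would \emph{define} $\Psi_{-1}:=\xi$ and $\Psi_{m+1}:=\zeta$; with these ghost values the free recursion holds at $n=0$ and $n=m$ as well, hence for all $n=0,\dots,m$. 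Since $\Psi_0=\alpha^z_{0,m}\xi+\beta^z_{0,m}\zeta$ and $\Psi_m=\gamma^z_{0,m}\xi+\delta^z_{0,m}\zeta$ by \eqref{eq-boundary-data}, the first step applied to this $(\Psi_n)$ yields, for all $\xi,\zeta$,
$$
T^z_{0,m}\pmat{\alpha^z_{0,m}\xi+\beta^z_{0,m}\zeta \\ \xi}\;=\;\pmat{\zeta \\ \gamma^z_{0,m}\xi+\delta^z_{0,m}\zeta}\,,
$$
that is, the matrix identity $T^z_{0,m}\pmat{\alpha^z_{0,m} & \beta^z_{0,m} \\ I & 0}=\pmat{0 & I \\ \gamma^z_{0,m} & \delta^z_{0,m}}$.

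Finally I would observe that $\pmat{\alpha^z_{0,m} & \beta^z_{0,m} \\ I & 0}$ is invertible precisely when $\beta^z_{0,m}$ is, with inverse $\pmat{0 & I \\ (\beta^z_{0,m})^{-1} & -(\beta^z_{0,m})^{-1}\alpha^z_{0,m}}$ (a one-line check), and multiply the displayed identity on the right by this inverse to read off the asserted expression for $T^z_{0,m}$. I do not expect any real obstacle here: the delicate part is purely the bookkeeping of the middle step — getting the Dirichlet conventions of $H_{0,m}$ right, matching the sources $\xi,\zeta$ at shells $0$ and $m$ to the ghost values $\Psi_{-1},\Psi_{m+1}$ (which come out to be $\xi$ and $\zeta$, not $0$), and keeping the factor order in $T^z_{0,m}=T^z_m\cdots T^z_0$ consistent with the direction of iteration. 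One should also note that each $T^z_n$, and hence $T^z_{0,m}$, is invertible, so the identity in the middle step determines $T^z_{0,m}$ uniquely once $\beta^z_{0,m}$ is invertible.
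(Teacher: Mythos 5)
Your proof is correct and is essentially the paper's own argument run in the opposite direction: the paper takes a solution of the recursion and applies $P_0^*(H_{0,m}-z)^{-1}$ and $P_m^*(H_{0,m}-z)^{-1}$ to the identity $(H_{0,m}-z)\hat\Psi = P_0\Psi_{-1}+P_m\Psi_{m+1}$ to solve for $\Psi_{m+1},\Psi_m$ in terms of $\Psi_0,\Psi_{-1}$, whereas you build the solution from the resolvent applied to boundary sources and then invert $\smat{\alpha & \beta \\ I & 0}$ at the end. The key identity (ghost values equal the sources, and $\Psi_0,\Psi_m$ are given by the boundary resolvent data) is the same in both, so no substantive difference.
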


Now, we can continue with the following.  Note, that $\omega\in \Omega_{m'}$ and $m\geq m'$ implies $\omega\in \Omega_m$.

\begin{lemma}\label{lem:fullrank}
Given $\omega \in \Omega_{m'}$, and $c> \eta > 0$ fixed,  there exists $\tilde m> m'$ such that
$\forall m > \tilde m$ and $ \forall \lambda \in [a,b]\;:\;$ we have ${\rm rank}\,\left[\Aa^{\lambda+i\eta}_m \right]\,=\,l_h$, where
$$
\Aa^z_m\,:=\,  \pmat{Y^z_m & \;I_{l_h} } Q_{z}^{-1} T^{z}_{0,m-1} \pmat{I_l \\ 0} 
$$
\end{lemma}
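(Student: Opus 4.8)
I would prove the rank assertion by contradiction combined with the limiting behavior established in Proposition~\ref{prop-limits+est}. Suppose the claim fails; then there is a sequence $m_j \to \infty$ and spectral parameters $\lambda_j \in [a,b]$ such that $\rank \Aa^{\lambda_j + i\eta}_{m_j} < l_h$, i.e. there is a unit row vector $w_j \in \CC^{l_h}$ with $w_j^* \Aa^{\lambda_j + i\eta}_{m_j} = 0$. By compactness of $[a,b]$ and of the unit sphere in $\CC^{l_h}$, after passing to a subsequence we may assume $\lambda_j \to \lambda_\infty \in [a,b]$ and $w_j \to w_\infty$ with $\|w_\infty\| = 1$. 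The goal is to show that the "limit object" $\Aa^{\lambda_\infty + i\eta}_\infty$ makes sense and has full rank $l_h$, contradicting $w_\infty^* \Aa^{\lambda_\infty+i\eta}_\infty = 0$.

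First I would make precise what happens to $\Aa^z_m$ as $m \to \infty$. By Proposition~\ref{prop-limits+est}(ii) we have $Y^z_m \to 0$ uniformly in $z = \lambda + i\eta \in [a,b]+i[-c,c]$ (for $\omega \in \Omega' \supset \Omega_{m'}$), so the block $\pmat{Y^z_m & I_{l_h}}$ converges uniformly to $\pmat{0 & I_{l_h}}$. The second factor $Q_z^{-1} T^z_{0,m-1} \smat{I_l \\ 0}$, however, does \emph{not} converge: the transfer matrix grows. The right way to handle this is to note that for $\omega \in \Omega_{m'}$ and $m > m'$ one has the decomposition $T^z_{0,m-1} = Q_z \Xx^z_{m',m-1} Q_z^{-1} T^z_{0,m'-1}$, so
$$
\Aa^z_m = \pmat{Y^z_m & I_{l_h}} \Xx^z_{m',m-1} \, Q_z^{-1} T^z_{0,m'-1} \pmat{I_l \\ 0}\;.
$$
The last factor $Q_z^{-1} T^z_{0,m'-1} \smat{I_l \\ 0} =: \smat{\aaa^z \\ \ccc^z}$ is fixed (depends only on $m'$, $\omega$, $z$) and is analytic in $z$. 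Writing $\Xx^z_{m',m-1}$ in the block form $\smat{A & B \\ C & D}$ and using that $\smat{A & B \\ C & D} \sim \smat{X & Z \\ 0 & I}$ via right multiplication by an element of the group $\left(\begin{smallmatrix} I & 0 \\ M & G\end{smallmatrix}\right)$, one sees that $\pmat{Y^z_m & I_{l_h}} \Xx^z_{m',m-1}$, after factoring out $D^z_{m',m-1}$ on the left (which is invertible by Proposition~\ref{prop-boundZ}), reduces to a bounded expression involving $Z^z_{m',m-1}$ and $(D^z_{m',m-1})^{-1} C^z_{m',m-1} = Y^z_{m'} + o(1)$; in fact a direct computation gives $\pmat{Y^z_m & I_{l_h}} \Xx^z_{m',m-1} = \big((D^z_{m',m-1})^{-1}\big)^* \!\cdot\! (\text{bounded}) \cdot \pmat{Y^z_{m'} & I_{l_h}}$ up to terms tending to $0$ — more cleanly, I would argue that the \emph{row space} of $\pmat{Y^z_m & I_{l_h}} \Xx^z_{m',m-1}$ converges to the row space of $\pmat{Y^z_{m'} & I_{l_h}}$ (an $l_h$-dimensional subspace of $\CC^{l+l_e+l_h}$-rows, which as $m' \to \infty$ tends to that of $\pmat{0 & I_{l_h}}$). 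Hence $\rank \Aa^z_m \to \rank\!\big[ \pmat{Y^z_{m'} & I_{l_h}} \smat{\aaa^z\\ \ccc^z} \big]$ as $m \to \infty$, locally uniformly in $z$.

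So the contradiction reduces to: for $m'$ large enough, $\rank\!\big[ Y^z_{m'} \aaa^z + \ccc^z\big] = l_h$ for all $z = \lambda + i\eta$, $\lambda \in [a,b]$. Since $Y^z_{m'} \to 0$ uniformly, it suffices to show $\rank \ccc^z = l_h$, i.e. that the lower $l_h$ rows of $Q_z^{-1} T^z_{0,m'-1} \smat{I_l \\ 0}$ are linearly independent, for $z = \lambda + i\eta$ with $\eta \neq 0$; this is a perturbative statement and then $Y^z_{m'} \aaa^z + \ccc^z$ has full rank by continuity once $\|Y^z_{m'}\|$ is small relative to the (uniform over $\lambda \in [a,b]$) smallest singular value of $\ccc^z$. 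To see $\rank \ccc^z = l_h$ at $\eta \neq 0$: by Proposition~\ref{prop-boundary-data}, $T^z_{0,m'-1}$ is built from resolvent boundary data of $H_{0,m'-1}$, which at non-real $z$ is a genuine (bounded) resolvent; one shows $\beta^z_{0,m'-1}$ is invertible and that the relevant block is nondegenerate — equivalently, one uses that at $\im z \neq 0$ the transfer matrix $T^z_{0,m'-1}$ cannot map the "Dirichlet" subspace $\{\smat{u \\ 0}\}$ into a subspace complementary to the $l_h$ contracting directions of $T^z$, because such vectors would correspond to $\ell^2$-eigenfunctions of $H_\omega$ at energy $z \notin \RR$. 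I would phrase this last step using the standard dichotomy for half-line (block) Jacobi operators: at $z \in \CC \setminus \RR$, exactly the subspace of initial data leading to $\ell^2$ solutions has dimension $l$, and it is transverse to the Dirichlet subspace (otherwise $0 = \Psi_0$ together with $\ell^2$-decay forces $\Psi \equiv 0$), which pins down $\rank \ccc^z = l_h$.

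\textbf{Main obstacle.} The delicate point is the convergence argument for $\pmat{Y^z_m & I_{l_h}} \Xx^z_{m',m-1}$ as $m \to \infty$: the matrix $\Xx^z_{m',m-1}$ blows up like $\Gamma^{m-m'}$, so one must carefully factor out the growth and show that what survives on the left is exactly the $l_h$-dimensional row space controlled by $Y^z_{m'}$ and that the error terms (coming from $Z^z_{m',m-1} \to 0$ and from the difference $(D^z)^{-1}C^z - Y^z_{m'}$) go to zero \emph{uniformly} in $\lambda \in [a,b]$ and in $w_j$. Concretely this means showing that the bottom $l_h$ rows dominate and that the normalized row vector $w_j^* \pmat{Y^{z_j}_{m_j} & I_{l_h}} \Xx^{z_j}_{m',m_j-1}$, after renormalization, converges to a vector annihilating $\pmat{Y^{z_\infty}_{m'}\aaa^{z_\infty} + \ccc^{z_\infty}}$ — hence forcing that matrix to be rank-deficient for every fixed large $m'$, contradicting the previous paragraph. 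Getting the uniformity right (using the uniform estimates \eqref{eq-Gamma-estimate}, Proposition~\ref{prop-boundZ}, and uniformity of $Y^z_m \to 0$) is the part that needs genuine care; everything else is either a routine linear-algebra identity or a standard fact about resolvents of finite restrictions.
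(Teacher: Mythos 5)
Your overall architecture (argue by contradiction, compactness in $m$, $\lambda$ and the annihilating vector, then pass to a limit object) differs from the paper, which proves full rank directly and uniformly for every $m>\tilde m$ and $\lambda\in[a,b]$; that difference alone would be acceptable. The genuine gap is that your reduction terminates exactly at the statement it is supposed to prove: $\rank\left[Y^z_{m'}\aaa^z+\ccc^z\right]=l_h$ is precisely $\rank\,\Aa^z_{m'}=l_h$ for a single large $m'$, and the argument you offer for that case does not work. Step (a), $\rank\,\ccc^z=l_h$ via the claim that $\ccc^z u=0$ would produce an $\ell^2$ eigenfunction at non-real $z$, is incorrect as stated: $\ccc^z u=0$ only says that $T^z_{0,m'-1}\smat{u\\0}$ has no component along the free expanding $\Gamma$-columns of $Q_z$; the remaining $l+l_e$ free directions include, at $\im z=\eta>0$, $l_e$ elliptic directions of modulus greater than one, and the dynamics for $n\ge m'$ is the perturbed one, so nothing forces such a solution to be square-summable. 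The transversality of the genuine ($l$-dimensional) $\ell^2$-subspace to the Dirichlet subspace is true but does not determine the rank of the $l_h\times l$ block $\ccc^z$. Step (b), absorbing $Y^z_{m'}\aaa^z$ perturbatively, needs $\|Y^z_{m'}\|\,\|\aaa^z\|$ to be small compared with the smallest singular value of $\ccc^z$; both $\|\aaa^z\|$ and $\ccc^z$ grow with $m'$ and you provide no uniform control. (Your ``row space convergence'' step, by contrast, is morally fine — there is in fact an exact identity $\Aa^z_m=\left[Y^z_m B^z_{m',m-1}+D^z_{m',m-1}\right]\Aa^z_{m'}$ — but it is not where the difficulty lies.)

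The mechanism the paper uses, and which is missing from your proposal, is the Herglotz property of the finite-volume resolvent. Multiplying $\Aa^z_m$ on the right by $\Bb'=\beta^z_{0,m-1}\smat{0\\ I_{l_h}}$ converts the unbounded column $T^z_{0,m-1}\smat{I_l\\ 0}$ (written via Proposition~\ref{prop-boundary-data}) into the \emph{bounded} object $\smat{I_l \\ \delta^z_{0,m-1}}\smat{0\\ I_{l_h}}$ with $\|\delta^z_{0,m-1}\|\le 1/\eta$. Up to an invertible diagonal factor one is left with $\smat{0 & I_{l_h}}\delta^z_{0,m-1}\smat{0\\ I_{l_h}}-\Gamma+\Rr^z$, where the first term has positive imaginary part because it is a corner of $(H_{0,m-1}-z)^{-1}$ with $\im z=\eta>0$, $\im(-\Gamma)>\eta\, I$, and $\|\Rr^z\|\le\Cc\,\varepsilon\,(1+1/\eta)<\eta$ once $Y^z_m$ is uniformly small (Proposition~\ref{prop-limits+est}~(ii)). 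Invertibility, hence $\rank\,\Aa^z_m=l_h$, follows from strict positivity of the imaginary part, uniformly in $m>\tilde m$ and $\lambda\in[a,b]$. This positivity argument is the content of the lemma; without it, or a correct substitute, your proof does not close.
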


\begin{proof} For notation we let $z=\lambda+i\eta$.  From Proposition~\ref{prop-limits+est} part (ii)we find that $Y^z_m$ is uniformly small for $m$ sufficiently big.  This means,  for any $\varepsilon>0$,  there exists $\tilde m> m'$ such that for any $n>m>\tilde m$ and any
$z=\lambda+i\eta \in [a,b]+i[-c,c]$ we have
$$
\| Y^z_{m}\|\,<\, \varepsilon\;.
$$
The $\varepsilon$ needed for the statement will be chosen later.

Using the definitions \eqref{eq-boundary-data} and Proposition~\ref{prop-boundary-data}
we find
$$
T^z_{0,m-1} \pmat{I_l \\ 0}\,=\,\pmat{(\beta^z_{0,m-1})^{-1} \\ \delta^z_{0,m-1} (\beta^z_{0,m-1})^{-1}}
$$
By the other ways of writing the transfer matrix, we see that $(\beta^z_{0,m-1})^{-1}$ exists for any $z$, at least after analytic continuation.
We also note that $\beta^z_{0,m-1}$ exists for any value $z$ except for the eigenvalues of $H_{0,m-1}$. Thus,  it exists for any $z=\lambda+i\eta$ with $\eta>0$.

In order to prove that $\Aa^z$ is of full rank $l_h$, it is sufficient to prove that $\Qq_\Gamma^{-1} \Gamma \Aa^z \Bb$ is invertible,  where
$\Bb\in \CC^{l \times l_h}$.
In particular,  we consider
$$
\Bb'=\beta^z_{0,m-1} \pmat{ 0 \\ I_{l_h}}\qtx{giving} \Aa^z_m \Bb'\,=\,
\pmat{Y^z_{m} & I_{l_h}} Q_z^{-1} \pmat{I_l \\ \delta^z_{0,m-1}} \pmat{0 \\ I_{l_h}}
$$
First,  take the 'limit case' and with \eqref{eq-Q^-1} we find
$$
\pmat{0 & I_{l_h}} Q_z^{-1} \pmat{I_l \\ \delta^z_{0,m-1}} \pmat{0 \\ I_{l_h}}\,=\,\pmat{0 & -\Qq_\Gamma & 0 & \Gamma^{-1} \Qq_\Gamma}
\pmat{0 \\ I_{l_h} \\ \delta^z_{0,m-1} \pmat{0 \\ I_{l_h}}}
$$
$$\;=\;
\Qq_\Gamma\Gamma^{-1} \left( -\Gamma+ \pmat{0 &  I_{l_h}} \delta^z_{0,m-1} \pmat{0 \\ I_{l_h}} \right)\;. 
$$
were we note that by their definition,  $\Qq_\Gamma=(\Gamma^{-1}-\Gamma)^{-1}$ and $\Gamma$ commute.
Thus, we find
$$
\Qq_\Gamma^{-1} \Gamma \Aa^z_m \Bb\,=\,
\pmat{0 &  I_{l_h}} \delta^z_{0,m-1} \pmat{0 \\ I_{l_h}}-\Gamma+\Rr^z
$$
where
$$
\Rr^z\,=\,\Qq_\Gamma^{-1} \Gamma \pmat{ Y^z_{m} & 0}Q_z^{-1} \pmat{I_{l_h} \\ \delta^z_{0,m-1}} \pmat{0 \\ I_{l_h}}\;.
$$
Using $\| \delta^z_{m,n} \| \leq \frac{1}{\eta}$ ,  where $z=\lambda+i\eta$,  and compactness, we get with some uniform constant $\Cc>0$ that
$$
\|\Qq_\Gamma^{-1} \Gamma\| \|Q_z^{-1}\|\,<\, \Cc \qtx{and} \left\| \pmat{I_l\\ \delta^z_{0,m-1} }\right\|\,<\, 1+ \frac1 \eta\;
$$
for all $z\in[a,b]+i[-c,c]$ and all $m> \tilde m$.  This gives
$$
\|\Rr^z\|\,\leq\, \Cc\,\varepsilon\, \left(1+\frac{1}{\eta}\right)
$$
for any $z=\lambda+i\eta \in [a,b]+i[-c,c]$ and any $m>\tilde m=\tilde m (\varepsilon)$.
Note,  $\Gamma=\Gamma(z)$ is a diagonal matrix,  such that
$$
\Gamma+\Gamma^{-1}\,=\,\pmat{\alpha_{l_e+1}-z \\ & \ddots \\\ & & \alpha_l-z}
$$
Moreover, as set above,  all diagonal entries of $\Gamma(z)$ are bigger than $e^{2\gamma}>1$.
We note, that the imaginary parts of $\Gamma^{-1}$ have opposite sign and  an absolute value smaller than for the corresponding values of $\Gamma$.
Thus, we find for $\eta>0$ that
$$
\Im(-\Gamma-\Gamma^{-1})=\eta \,I\, \qtx{implying} \Im(-\Gamma)>\eta \,I\;.
$$
In general, we will define the "imaginary" part in $C^*$ algebra sense, that is
$\Im(\Aa)=(\Aa-\Aa^*)/(2i)$,  
then 
$$
\Im\left[ \pmat{0 & I_{l_h}} \delta^z_{0,m-1} \pmat{0 \\ I_{l_h}}\right]\,>\,0 
$$
for $\eta>0$ and $z=\lambda+i\eta$ Hence,  we finally obtain
$$
\Im\left(\Qq_\Gamma^{-1} \Gamma \Aa^z_m \Bb - \Rr^z \right)\,>\, \eta\,I_{l_h}\,.
$$
Thus,  if
$$
\varepsilon < \frac{\eta^2}{\Cc(1+\eta)}  \qtx{implying} \|\Rr^z\|\,<\,\eta
$$
then,  the $l_h \times l_h$ matrix $\Qq_\Gamma^{-1} \Gamma \Aa^z_m \Bb$ is invertible and we have ${\rm rank}(\Aa^z_m) \geq l_h$ (for any $m>\tilde m$).
By the dimensions of $\Aa^z_m \in \CC^{l_h \times l}$ we also have ${\rm rank} (\Aa^z_m) \leq l_h$.
\end{proof}

\begin{proposition}\label{th:main-part}
Let $\omega \in \Omega'\cap \tilde \Omega$, where $\tilde \Omega$ is the set as in Proposition~\ref{prop-est-schur}.  
Then,  there is a finite set $\{\lambda_1,\ldots, \lambda_k\}$ such that the spectrum of $H_\omega$ is purely absolutely continuous in
$(a,b) \setminus \{\lambda_1,\ldots, \lambda_k\}$.  \\
If there is no hyperbolic channel, that is $l_h=0$,  then the spectrum is purely absolutely continuous in $(a,b)$.
\end{proposition}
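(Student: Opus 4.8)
The plan is to verify, for every shell-$0$ vector $\vx\in\CC^l$, the hypothesis of the absolutely continuous spectrum criterion Theorem~\ref{th-ac-criterion} on suitable open subintervals of $(a,b)$, using the representation \eqref{eq-u-y} from Lemma~\ref{lem:u-y} to rewrite $T^\lambda_{0,n}\smat{\vec{u}_{\lambda,n}\\\vx}$ as $Q_\lambda\smat{X^\lambda_{m,n}\,\vec{y}_{\lambda,n}\\0}$ and then feeding in the Schur-complement bound of Proposition~\ref{prop-est-schur}. The one thing to be checked is that the rank condition \eqref{cond:rank} holds at \emph{real} $\lambda$ for all but finitely many $\lambda\in[a,b]$. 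To arrange this, fix $\omega\in\Omega'\cap\tilde\Omega$ and (using $V_n(\omega)\to0$) an $m'$ with $\omega\in\Omega_{m'}$; fix $c>\eta>0$, let $\tilde m$ be as in Lemma~\ref{lem:fullrank}, and fix any $m>\tilde m$. With $\aaa,\bbb,\ccc,\ddd$ the blocks of $Q_\lambda^{-1}T^\lambda_{0,m-1}$ from the proof of Lemma~\ref{lem:u-y}, a direct computation gives $\Aa^\lambda_m=Y^\lambda_m\aaa+\ccc$, and $z\mapsto\Aa^z_m$ is analytic on a neighbourhood of $[a,b]$ (the transfer matrices are polynomial in $z$, $Q_z^{-1}$ is analytic, and $Y^z_m$ is analytic by Proposition~\ref{prop-limits+est}(ii); if needed we set up the constructions of Sections~\ref{sec:3-channels}--\ref{sec:3-key-est} on a slightly larger interval, the channel partition being constant on $(\lambda_0,\lambda_1)\supset[a,b]$). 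For each $l_h$-element subset $I$ of $\{1,\ldots,l\}$ let $g_I(z)$ be the determinant of the $l_h\times l_h$ submatrix of $\Aa^z_m$ formed by the columns indexed by $I$. By Lemma~\ref{lem:fullrank}, for every $\lambda\in[a,b]$ some $g_I(\lambda+i\eta)\neq0$, so the $g_I$ cannot all vanish identically and $g_{I_0}\not\equiv0$ for some $I_0$; hence its zeros are isolated and $\{\lambda_1,\ldots,\lambda_k\}:=\{\lambda\in[a,b]:g_{I_0}(\lambda)=0\}$ is finite. For $\lambda\in[a,b]\setminus\{\lambda_1,\ldots,\lambda_k\}$ the corresponding submatrix of $\Aa^\lambda_m$ is invertible, so ${\rm rank}\,\Aa^\lambda_m=l_h$ and \eqref{cond:rank} holds with $Y=Y^\lambda_m$; by openness of the rank condition it still holds with $Y=(D^\lambda_{m,n})^{-1}C^\lambda_{m,n}$ for all $n$ large enough, since $(D^\lambda_{m,n})^{-1}C^\lambda_{m,n}\to Y^\lambda_m$.

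Next, fix a compact interval $[a',b']\subset(a,b)\setminus\{\lambda_1,\ldots,\lambda_k\}$ and choose $M\in\CC^{l\times l_h}$ so that $(Y\aaa+\ccc)M$ is the $l_h\times l_h$ submatrix of $Y\aaa+\ccc$ with columns $I_0$. Then the formula from the proof of Lemma~\ref{lem:u-y} reads $\vec{u}_{\lambda,n}=M[(Y_n\aaa+\ccc)M]^{-1}(-Y_n\bbb-\ddd)\vx$ and $\vec{y}_{\lambda,n}=\aaa\vec{u}_{\lambda,n}+\bbb\vx$ with $Y_n=(D^\lambda_{m,n})^{-1}C^\lambda_{m,n}$. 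Since $Y_n\to Y^\lambda_m$ uniformly in $\lambda\in[a,b]$ (the series in the proof of Proposition~\ref{prop-limits+est}(ii) converges uniformly) and $|g_{I_0}|$ is continuous and strictly positive, hence bounded below, on $[a',b']$, the matrices $[(Y_n\aaa+\ccc)M]^{-1}$ are uniformly bounded on $[a',b']$ for all large $n$; together with the boundedness of $\aaa,\bbb,\ccc,\ddd$ on $[a,b]$ this gives $\sup_{n\ge N_0}\sup_{\lambda\in[a',b']}\|\vec{y}_{\lambda,n}\|\le\Cc_y<\infty$. By \eqref{eq-u-y} and $\|Q_\lambda\|\le\Cc_Q$ we then obtain, for all $n\ge N_0$,
\begin{equation*}
\int_{a'}^{b'}\Bigl\|T^\lambda_{0,n}\smat{\vec{u}_{\lambda,n}\\\vx}\Bigr\|^4\,{\rm d}\lambda\;\le\;\Cc_Q^4\,\Cc_y^4\int_a^b\|X^\lambda_{m,n}\|^4\,{\rm d}\lambda,
\end{equation*}
and taking $\liminf_{n\to\infty}$ and using $\omega\in\tilde\Omega$ in Proposition~\ref{prop-est-schur} gives $\liminf_{n\to\infty}\int_{a'}^{b'}\|T^\lambda_{0,n}\smat{\vec{u}_{\lambda,n}\\\vx}\|^4\,{\rm d}\lambda<\infty$.

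This is exactly the hypothesis of Theorem~\ref{th-ac-criterion} for the shell-$0$ vector $\vx$ on $(a',b')$, and applying it for every $\vx\in\CC^l$ shows that the spectrum of $H_\omega$ is purely absolutely continuous in $(a',b')$. Exhausting $(a,b)\setminus\{\lambda_1,\ldots,\lambda_k\}$ by countably many such compact intervals $[a',b']$ and using that purely absolutely continuous spectrum in an open set is a local property, we conclude that the spectrum of $H_\omega$ is purely absolutely continuous in $(a,b)\setminus\{\lambda_1,\ldots,\lambda_k\}$. When $l_h=0$ the rank condition is vacuous, so there are no exceptional points and the same argument (with $[a',b']$ any compact subinterval of $(a,b)$) yields pure absolutely continuous spectrum in all of $(a,b)$.

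The main obstacle is the first step. Lemma~\ref{lem:fullrank} provides full rank only for $\eta>0$, where the positivity $\Im(-\Gamma)>\eta I$ is available; transporting this to real $\lambda$ at the cost of only a finite exceptional set is precisely what the analyticity of $z\mapsto Y^z_m$ across $\eta=0$ together with the isolated-zeros principle for analytic functions delivers. Tied to it is the requirement to bound $\|\vec{y}_{\lambda,n}\|$ uniformly in $n$ and $\lambda$ on intervals bounded away from the $\lambda_j$, which rests on the uniform-in-$\lambda$ convergence $(D^\lambda_{m,n})^{-1}C^\lambda_{m,n}\to Y^\lambda_m$; everything else is routine once these two points are settled.
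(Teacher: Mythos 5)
Your proposal is correct and follows essentially the same route as the paper: Lemma~\ref{lem:fullrank} at $\Im(z)=\eta>0$ plus analyticity to produce the finite real exceptional set, then uniformly bounded solutions $\vec{u}_{\lambda,n},\vec{y}_{\lambda,n}$ from Lemma~\ref{lem:u-y} fed into Proposition~\ref{prop-est-schur} and Theorem~\ref{th-ac-criterion}. The only (harmless) variations are that you track full rank via a fixed nonvanishing minor $g_{I_0}$ and a single matrix $M$, giving a globally continuous formula for $\vec{u}_{\lambda,n}$, where the paper instead covers the compact set $\Ss$ by finitely many neighborhoods to get piecewise continuous selections, and that you leave implicit the cyclicity of the shell-$0$ vectors $\delta_0\otimes\vx$, which the paper states explicitly.
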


\begin{proof}
For some $m'$ we find $\omega \in \Omega_{m'}$.  Choose $\eta$ with $c>\eta>0$,  take $\tilde m>m'$ as in Lemma~\ref{lem:fullrank} and consider some fixed $m>\tilde m$. 
We note that we also have $\omega \in \Omega_m$.  Now,  using the notation as above,  $\Aa^z_m$ has full rank for $\Im m(z)= \eta$. 
By analyticity,  the rank of $\Aa^z_m$ is full for all but finitely many values of $z=\lambda+i\eta \in [a,b]+i [-c,c]$. 
We may now restrict to the real line again and let
$\{\lambda_1, \ldots, \lambda_k\} \subset [a,b]$ be the finite set of energies, where ${\rm rank}(\Aa^\lambda_m)< l_h$.

We consider now a compact interval $[a',b']\subset [a,b] \setminus\{\lambda_1,\ldots,\lambda_k\}$.
For all $\lambda \in [a',b']$ we find that $\Aa^\lambda_m$ has full rank $l_h$.
By compactness,  the set $\{\Aa^\lambda_m\,:\, \lambda \in [a',b']\}$ has some positive distance,  say $\varepsilon>0$,  to the set of $l_h \times l$ matrices of non full rank.

In order to get to the point of Lemma~\ref{lem:fullrank}, let us introduce the notations
$$
\Aa^z_{Y} = \pmat{ Y & I_{l_h} } Q_z^{-1} T^{z}_{0,m-1} \pmat{I_l \\ 0}
$$
$$
Y^z_{m,n} = (D^z_{m,n})^{-1} C^z_{m,n}\qtx{and}
\Aa^z_{m,n} = \Aa^z_{Y^z_{m,n}}= \pmat{ Y^z_{m,n} & I_{l_h} } Q_z^{-1} T^{z}_{0,m-1} \pmat{I_l \\ 0}\;.
$$
Again by compactness we note that $\|Q_\lambda^{-1} T^\lambda_{0,m-1} \|<\Cc$ for all $\lambda \in [a,b]$.  (Note, that $m$ is fixed now!).
Thus we see that
$$
\|\Aa^\lambda_{Y} - \Aa^\lambda_m \| \,\leq\, \Cc \|Y - Y^\lambda_m\|
$$
for all $\lambda \in [a,b] \supset [a',b']$.
Therefore,  if 
$$
\|Y-Y^\lambda_m\| < \frac{\varepsilon}{\Cc} \qtx{implying}  \|\Aa^\lambda_Y - \Aa^\lambda_m\|<\varepsilon
$$
then $\Aa^\lambda_Y$ is of full rank $l_h$.

Now,  consider the compact set
$$
\Ss=\left\{\ ( \lambda,Y)\,:\, \lambda\in[a',b']\,,\,\|Y-Y^\lambda_m\|\leq  \frac\varepsilon{2\Cc} \right\}\;.
$$
By Lemma~\ref{lem:u-y},  for any $(\lambda',Y') \in \Ss$, we find some  neighborhood $\Uu_{\lambda',Y'}$ and solutions $\vec{u}_{\lambda,Y}$, $\vec{y}_{\lambda,Y}$ to \eqref{eq-u-y-general},  that depend continuously on $(\lambda,Y) \in \Uu_{Y', \lambda'}$. Possibly shrinking the neighborhood a bit, we may assume it is compact,
and thus,  $\|\vec{y}_{\lambda,Y}\|$ attains a maximum in $\Uu_{\lambda',Y'}$.
By compactness,  $\Ss$ can be covered by finitely many such compact neighborhoods $\Uu'$.
Making a specific choice in the overlaps of these finitely many neighborhood,  we find piece-wise continuous functions
$$
\vec{u}\,:\, \Ss \to \CC^l\;, \;(\lambda,Y) \to \vec{u}_{\lambda,Y}\;, \quad \vec{y}\,:\, \Ss \to \Cc^{l+l_e}\; ,\;
(\lambda,Y) \to \vec{y}_{\lambda,Y}
$$
satisfying equation \eqref{eq-u-y-general} such that for some constant $\Cc_{\vec{y}} < \infty$ and all $(\lambda,Y) \in \Ss$ we have
$$
\| \vec{y}_{\lambda,Y}\|\,\leq\, \Cc_{\vec{y}}\;.
$$

As mentioned in the proof of Proposition~\ref{prop-limits+est} part (ii), the convergence of $Y^z_{m,n} \to Y^z_m$ for $n\to \infty$ is uniform in $z$,  as such
we find $N>0$ such that $\forall n>N$ and all $\lambda\in[a',b']$ we have
$$
\|Y^\lambda_{m,n}-Y^\lambda_m\|\,\leq\,\frac \varepsilon{2\Cc} \qtx{implying} (\lambda, Y^\lambda_{m,n})\,\in\,\Ss\;.
$$
Thus,  for all $n>N$,  and all $\lambda \in [a',b']$ we may choose
$$
\vec{u}_{\lambda,n}= \vec{u}_{Y^\lambda_{m,n},\lambda}\;, \quad \vec{y}_{\lambda,n}=\vec{y}_{Y^\lambda_{m,n},\lambda}\;.
$$
By \eqref{eq-u-y} we obtain that
\begin{equation}
\left\|T^\lambda_{0,n} \smat{\vec{u}_{\lambda,n} \\ \vx} \right\|\,=\,
\left\|Q_\lambda \pmat{ X^\lambda_{m,n}\, \vec{y}_{\lambda,n} \\ 0} \right\| \, \leq\,  \|Q_\lambda\|\, \|X^\lambda_{m,n}\|\, \|\vec{y}_{\lambda,n}\|\, \leq \,
\Cc_Q \Cc_{\vec{y}} \| X^\lambda_{m,n}\|
\end{equation}
for $\lambda \in [a',b']$ and all $n>N$.

Hence,  using that $\omega \in \tilde \Omega$ we get by Proposition~\ref{prop-est-schur} that
$$
\liminf_{n\to \infty} \int_{a'}^{b'} \left\|T^\lambda_{0,n} \smat{\vec{u}_{\lambda,n}\\ \vx} \right\|^4\,{\rm d}\lambda\,\leq\,
\Cc_Q \Cc_y \liminf_{n\to\infty} \int_{a'}^{b'} \|X^\lambda_{m,n}\|^4\,{\rm d}\lambda\,<\, \infty\;.
$$
Hence,  Theorem~\ref{th-ac-criterion} gives that the spectral measure at $\delta_0 \otimes \vx$ is purely absolutely continuous in $[a',b']$.  
As $\vx \in \CC^l$ was arbitrary,  and the closures of ${\rm span}(\{(H_\omega)^k \delta_0 \otimes \vx\,:\, \vx \in \CC^l,\; k\in \NN_0\})$ is the whole Hilbert space,
we find that the spectrum of $H_\omega$ is purely absolutely continuous in $(a',b')$.
Now,  the set $\Sigma'=[a,b] \setminus\{\lambda_1,\ldots,\lambda_k\}$ can be written as countable union of intervals $(a',b')$ such that $[a',b'] \subset \Sigma'$.
Therefore,  the spectrum of $H_\omega$ is purely absolutely continuous in $\Sigma'$. \
Note that $\lambda_1, \ldots, \lambda_k$ may be eigenvalues of $H_\omega$,  but they do not have to be. If $\lambda_j$ is not an eigenvalue, 
then the spectrum of $H_\omega$ is also purely absolutely continuous in a neighborhood of $\lambda_j$. 
Thus, we only need to subtract the eigenvalues from the set $[a,b]$.\\
Note,  in the intersection of all the bands,  that is,  if $l_h=0$,  one has $X^\lambda_{m,n}= Q_\lambda^{-1} T^\lambda_{m,n} Q_\lambda$,  $\vec{y}_{\lambda,n} =Q_\lambda^{-1} T^\lambda_{0,n} \smat{\vec{u}_{\lambda,n} \\ \vx}$ and one can choose any family of uniformly bounded vectors $\vec{u}_{\lambda,n}$ to get pure absolutely continuous spectrum in $(a,b)$. There is no need to subtract a finite set of values.
\end{proof}

Theorem~\ref{th:main} now essentially follows directly from this proposition:

\begin{proof}
First note that the set $\Omega'$ does not depend on the interval $[a,b]$ analyzed above,  but $\tilde \Omega$ does. 
Using compact intervals inside $\Sigma$ with rational boundary points we may write $\Sigma$ as countable union of open intervals, whose closure is inside $\Sigma$,
$$
\Sigma=\bigcup_{i=1}^\infty (a_i, b_i) \qtx{where} [a_i, b_i] \subset \Sigma\;.
$$
As $\Sigma$ does not contain any band-edges,  for each $j=1\ldots, l$ the type of the $j$-th channel does not change in $[a_i, b_i]$.  Therefore,  one can make the whole analysis as done for the compact interval $[a,b]$ above for the interval $[a_i,b_i]$.  In particular,  there is a corresponding set $\tilde \Omega_j$ of probability one for the set $[a_i, b_i]$.
We then let
$$
\hat \Omega = \Omega' \cap \bigcap_{i=1}^\infty \tilde \Omega_i 
$$
and note $\bP(\hat \Omega)\,=\,1\;$.
Let $\omega \in \hat \Omega$ and let $\Ccc \subset \Sigma$ be compact.
Using compactness,  there is a finite sub-collection of these intervals,  $[a_{i_k}, b_{i_k}]$, $k=1,\ldots n$, such that
$$
\Ccc \subset \bigcup_{k=1}^n (a_{i_k}, b_{i_k})\;.
$$
Theorem~\ref{th:main-part} gives that there is a finite set $\Eee_k$ of eigenvalues,  such that the spectrum of $H_\omega$ in $(a_{i_k},b_{i_k}) \setminus \Eee_k$ is purely absolutely continuous.
Letting $\Eee=\bigcup_{k=1}^n \Eee_k$,  which is finite,  we see that the spectrum in $\Ccc \setminus \Eee$ is purely absolutely continuous.\\
Due to the last comment in Theorem~\ref{th:main-part}, the spectrum of $H_\omega$ is purely absolutely continuous in the intersection of all bands $\Sigma_0$ (which might be an empty set).
\end{proof}

\section{Acknowledgement}

This work has been supported by the Chilean grants FONDECYT Nr. 1161651,  FONDECYT Nr. 1201836 and the Nucleo Mileneo MESCD.

\appendix

\section{Transfer matrices and spectral averaging formula on the strip \label{app-Transfermatrix}}

As above we consider operators of the form
$$
(H\Psi)_n=-\Psi_{n-1}-\Psi_{n+1}+B _n \Psi_n
$$
on $\ell^2(\ZZ_+)\otimes \CC^l$.
Solving the eigenvalue equation $H\Psi=z\Psi$ leads to the transfer matrices
$$
T^z_n=\pmat{B_n-zI & -I \\ I & \nul} \qtx{and the equation} \pmat{\Psi_{n+1} \\ \Psi_n}=T^z_n \pmat{\Psi_n \\ \Psi_{n-1}}
$$
Then,  for $n>m$ we define the products
$$
T^z_{m,n}=T^z_n T^z_{n-1} \cdots T^z_{m+1} T^z_m \qtx{leading to} \pmat{\Psi_{n+1} \\ \Psi_n}=T^z_{m,n} \pmat{\Psi_m \\ \Psi_{m-1}}
$$
for a formal solution of $H\Psi=z\Psi$.

\subsection{Transfer matrices and resolvent boundary data}

Let $n>m$, be non-negative integers.  With $H_{m,n}$ we denote the restriction of $H$ to $\ell^2(\{m,m+1,\ldots,n\}) \otimes \CC^l$, that is
$$
H_{m,n}=\pmat{B_m & -I & \\ -I & B_{m+1} & -I \\ & \ddots & \ddots & \ddots \\ & & \ddots & \ddots & -I \\ & & & -I & B_n }
$$
Then we define the $m$ to $n$ boundary resolvent data for $z \not \in \sigma(H_{m,n})$ by
\begin{equation}\label{eq-projections}
\pmat{\alpha^z_{m,n} & \beta^z_{m,n} \\ \gamma^z_{m,n} & \delta^z_{m,n}}=
\pmat{P_m^* \\ P_n^*} (H_{m,n}-z)^{-1} \pmat{P_m & P_n}
\end{equation}
where $P_k$ is the natural embedding of $\ell^2(\{k\})\otimes \CC^l$ into $\ell^2(\{m,m+1,\ldots,n\})\otimes \CC^l$ for $m\leq k \leq n$.
This means, e.g.  $\alpha^z_{m,n}=P_m^* (H_{m,n}-z)^{-1} P_m$, and in this setup
$$
P_m=\pmat{I\\ \nul \\ \vdots \\ \nul}\, ,\quad P_n=\pmat{\nul \\ \vdots \\ \nul \\ I}\quad \in\;\;\CC^{(n-m+1)l \times l}\;.
$$
Note that $\alpha^z_{m,n}, \beta^z_{m,n}, \gamma^z_{m,n}, \delta^z_{m,n}$ are all $l\times l$ matrices.

\begin{proposition}\label{prop-app-A}
Let be given $n\geq m\in \ZZ_+$ and let $z\not \in \sigma(H_{m,n})$ and let $\beta^z_{m,n}$ be invertible. 
Then, 
$$
T^z_{m,n}\,=\, \pmat{(\beta^z_{m,n})^{-1} & - (\beta^z_{m,n})^{-1}\alpha^z_{m,n}\\
\delta^z_{m,n} (\beta^z_{m,n})^{-1} & \gamma^z_{m,n} - \delta^z_{m,n} (\beta^z_{m,n})^{-1} \alpha^z_{m,n}}
$$
\end{proposition}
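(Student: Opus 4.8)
The approach is the one standard for discrete Jacobi-type operators: a formal solution of $H\Psi=z\Psi$, once restricted to the window $\{m,\dots,n\}$, equals $(H_{m,n}-z)^{-1}$ applied to a source supported on the two boundary shells, and this turns the transfer-matrix identity into an algebraic relation between the boundary resolvent data in \eqref{eq-projections}. So I would fix $z\notin\sigma(H_{m,n})$ with $\beta^z_{m,n}$ invertible and take an arbitrary formal solution $(\Psi_{m-1},\Psi_m,\dots,\Psi_{n+1})$ of $-\Psi_{k-1}-\Psi_{k+1}+B_k\Psi_k=z\Psi_k$ for $m\le k\le n$. Every choice of initial data $\smat{\Psi_m\\\Psi_{m-1}}\in\CC^{2l}$ extends to such a solution (run the recursion forward), and by definition $T^z_{m,n}$ is the unique $2l\times 2l$ matrix with $\smat{\Psi_{n+1}\\\Psi_n}=T^z_{m,n}\smat{\Psi_m\\\Psi_{m-1}}$ for all of them; hence it is enough to show that the matrix on the right-hand side of the claimed identity realizes the same linear map.

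Next I would put $\psi=(\Psi_m,\dots,\Psi_n)\in\ell^2(\{m,\dots,n\})\otimes\CC^l$ and compute $(H_{m,n}-z)\psi$ shell by shell. On interior shells $m<k<n$ this vanishes by the recursion; at $k=m$ one gets $\big((H_{m,n}-z)\psi\big)_m=(B_m-z)\Psi_m-\Psi_{m+1}=\Psi_{m-1}$ and at $k=n$ one gets $\big((H_{m,n}-z)\psi\big)_n=-\Psi_{n-1}+(B_n-z)\Psi_n=\Psi_{n+1}$. Thus $(H_{m,n}-z)\psi=P_m\Psi_{m-1}+P_n\Psi_{n+1}$, so $\psi=(H_{m,n}-z)^{-1}(P_m\Psi_{m-1}+P_n\Psi_{n+1})$. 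Applying $P_m^{*}$ and $P_n^{*}$ and using \eqref{eq-projections} gives the two identities
\begin{align*}
\Psi_m&=\alpha^z_{m,n}\Psi_{m-1}+\beta^z_{m,n}\Psi_{n+1},\\
\Psi_n&=\gamma^z_{m,n}\Psi_{m-1}+\delta^z_{m,n}\Psi_{n+1}.
\end{align*}

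To finish, I would solve this system for the output pair: invertibility of $\beta^z_{m,n}$ yields $\Psi_{n+1}=(\beta^z_{m,n})^{-1}\Psi_m-(\beta^z_{m,n})^{-1}\alpha^z_{m,n}\Psi_{m-1}$, and substituting this into the second identity gives $\Psi_n=\delta^z_{m,n}(\beta^z_{m,n})^{-1}\Psi_m+\big(\gamma^z_{m,n}-\delta^z_{m,n}(\beta^z_{m,n})^{-1}\alpha^z_{m,n}\big)\Psi_{m-1}$. Stacking these two equations is exactly the asserted formula for $T^z_{m,n}$, and since $\smat{\Psi_m\\\Psi_{m-1}}$ ranges over all of $\CC^{2l}$, the two matrices coincide.

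I do not expect a genuine obstacle here: the hypotheses enter precisely where one would guess — $z\notin\sigma(H_{m,n})$ to invert $H_{m,n}-z$, and invertibility of $\beta^z_{m,n}$ to solve for $\Psi_{n+1}$ — and everything else is bookkeeping with the $-I$ off-diagonal blocks and with the orientation conventions ($\smat{\Psi_{n+1}\\\Psi_n}$ versus $\smat{\Psi_m\\\Psi_{m-1}}$ and the column order), which I would recheck against the definitions of $T^z_n$ and $H_{m,n}$ before fixing signs. An alternative would be induction on $n-m$ using the cocycle relation $T^z_{m,n}=T^z_nT^z_{m,n-1}$ together with a Schur-complement update of the boundary data, but the direct computation above is shorter and more transparent.
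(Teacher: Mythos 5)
Your proposal is correct and follows essentially the same route as the paper: restrict a formal solution to the window $\{m,\dots,n\}$, observe that $(H_{m,n}-z)\psi=P_m\Psi_{m-1}+P_n\Psi_{n+1}$, apply $P_m^*(H_{m,n}-z)^{-1}$ and $P_n^*(H_{m,n}-z)^{-1}$ to obtain the two boundary relations, and solve for $\smat{\Psi_{n+1}\\ \Psi_n}$ using the invertibility of $\beta^z_{m,n}$. The signs and the final uniqueness argument (initial data $\smat{\Psi_m\\ \Psi_{m-1}}$ determine the solution, hence the matrix) match the paper's proof exactly.
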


\begin{proof}
For $\Psi=(\Psi_n)_n$ with $\Psi_n \in \CC^l$ we define the notations:
$$\hat \Psi_k := \Psi_k , \quad k<m. \quad \quad \quad \hat \Psi_m :=  \smat{\Psi_m \\ \Psi_{m+1} \\   \vdots \\ \Psi_n }, \qquad
 \hat \Psi_{m+1} : = \Psi_{n+1},$$
and we use $P_m$ and $P_n$ as in \eqref{eq-projections},
then we have
$$\Psi_{m}=P_m^* \hat \Psi_m\;, \quad \Psi_n:=P_n^* \hat \Psi_m $$
and we get
$$\widehat{(H\Psi)}_m:=H_{m,n}\hat \Psi_m-P_m \hat \Psi_{m-1}-P_n \hat \Psi_{m+1}\,.  $$

With $z$ being the spectral parameter, $\widehat{(H \Psi)}_m = z \hat \Psi_m$ leads to
$$P_n \hat \Psi_{m+1} = (H_{m,n}-z)\hat \Psi_m-P_m \hat \Psi_{m-1}$$
Multiplying with $P_m^*(H_{m,n}-z)^{-1}$ from the left, noting that $\hat \Psi_{m+1}=\Psi_{n+1}$ and using \eqref{eq-projections} gives
$$\beta^z_{m,n}  \Psi_{n+1}  =\Psi_m-\alpha^z_{m,n} \Psi_{m-1} \implies  \Psi_{n+1}  =(\beta^z_{m,n})^{-1}\Psi_m-(\beta^z_{m,n})^{-1}\alpha^z_{m,n} \Psi_{m-1} $$
Multiplying from the left  with $P_n^*(H_{m,n}-z)^{-1}$ instead of $P_m^*(H_{m,n}-z)^{-1})$ leads to
$$\delta^z_{m,n} \Psi_{n+1}=\Psi_n-\gamma^z_{m,n} \Psi_{m-1}  $$
Replacing $\Psi_{n+1}$ with the formula above and resolving for $\Psi_n$ leads to
$$\Psi_n=\delta^z_{m,n} (\beta^z_{m,n})^{-1}\Psi_m+(\gamma^z_{m,n} -\delta^z_{m,n} (\beta^z_{m,n})^{-1} \alpha^z_{m,n})\Psi_{m-1} $$
Finally,  we have:
$$\pmat{\Psi_{n+1} \\ \Psi_{n}}=\pmat{(\beta^z_{m,n})^{-1} & - (\beta^z_{m,n})^{-1}\alpha^z_{m,n}\\
\delta^z_{m,n} (\beta^z_{m,n})^{-1} & \gamma^z_{m,n} - \delta^z_{m,n} (\beta^z_{m,n})^{-1} \alpha^z_{m,n}} \pmat{\Psi_m \\ \Psi_{m-1}}
$$ 
As $\Psi_m, \Psi_{m-1}$ determine the solution to $H\Psi=z\Psi$ uniquely,  the matrix must be
 $T^z_{m,n}$.
\end{proof}

\subsection{Spectral averaging formula}

Here we state the strip-equivalent of the spectral average formula from Carmona-Lacroix \cite[Theorem III.3.2 and III.3.6]{Cala}.
It is a special case of \cite[Theorem~1]{Sadel2019}.
First, we need to fix a vector in the root-slice. Thus, we choose some
$\vx \in \CC^l$ which we identify with $\delta_0 \otimes \vx \in \ell^2\{\Z_+\} \otimes \CC^l$.
 Let us assume that $\|\vx\|=1$, so that ${\vx}^* \vx = 1$. Furthermore, identifying $\vx\,^*$ with a linear map from $\CC^l$ to $\CC$, we have a $l-1$ dimensional kernel consisting of the vectors orthogonal to $\vx$,
 $$
 \KK:=\ker(\vx\,^*) = \{\vv \in \CC^l\,:\,\vx\,^* \vv = 0  \} \,=\,\{\vv \in \CC^l\,:\, \vx \cdot \vv = 0\}.
$$
Then, in this special case, the work of \cite{Sadel2019} simply replaces $T^z_0$ by the set of $2l \times 2$ matrices
\begin{equation}
\TT^z_0\,=\, \left\{ \pmat{(B_n-zI) (\vx+\vv) & - \vx +(B_n-zI) \vw\\ \vx+\vv & \vw }\,:\, \vv,\vw \in \KK\;   \right\}\;\subset\; \CC^{2l \times 2}\;.
\end{equation} 
Note that
\begin{equation}\label{eq-form-T_0}
T^z_0 = \pmat{B_0 -zI & -I \\ I & \nul} \qtx{and}
\TT^z_0 = T^z_0\,\left\{ \pmat{\vx+\vv & \vw \\ \nul & \vx }\,:\, \vv,\vw \in \KK \right\} 
\end{equation}
where we adopt the notation that $T\Ab=\{TA\,:\, A \in \Ab\}$ for sets of matrices $\Ab$.

Moreover we consider the spectral measure $\mu_{\vx}$ at the vector $\vx \equiv \delta_0 \otimes \vx$, that means
$$
\int f d\mu_{\vx}\,=\, \langle \delta_0 \otimes \vx, \, f(H)\,(\delta_0 \otimes \vx)\, \rangle\;.
$$
Now,  using that the operator $H$ can not have compactly (finitely) supported eigenfunctions, Theorem~1 in \cite{Sadel2019} implies the following:
\begin{proposition}\label{th-spec-av-strip}  {\rm \cite{Sadel2019}}
In the sense of a weak limit for finite measures one finds that
\begin{align*}
{\rm d}\mu_{\vx}(\lambda)\,&=\,\lim_{n\to \infty} \,\frac{1}{\pi}\, \frac{{\rm d} \lambda}{ \min\limits_{T^\lambda \in \TT^\lambda_0}\| T^\lambda_{1,n} \, T^\lambda \smat{1 \\ 0} \|^2}\,=\,
\lim_{n\to \infty} \frac1 \pi\, \frac{{\rm d}\lambda}{\min\limits_{\vec v \in \KK} \left\|T^\lambda_{0,n}  \smat{\vx+\vec v \\ 0} \right\|^2 }
\end{align*}
\end{proposition}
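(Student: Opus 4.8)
The plan is to obtain the formula as the specialization of the general spectral averaging theorem \cite[Theorem~1]{Sadel2019} to the one-dimensional ``channel'' in the root slice spanned by the unit vector $\vx$. The first step is to recall that theorem and to identify the objects entering it in the present setting: it expresses the spectral measure attached to a subspace of the $0$-th slice as a weak limit of finite measures of precisely the displayed shape, built from the transfer products $T^\lambda_{1,n}$ and a minimization over the admissible set of initial data. Here the subspace is the line $\CC\vx \subset \CC^l$, and the role played by the single matrix $T^z_0$ in the scalar Carmona--Lacroix formula is taken over by the set $\TT^z_0 \subset \CC^{2l\times 2}$ written out above; plugging this in gives directly the first of the two expressions for ${\rm d}\mu_{\vx}$, in the sense of weak convergence of finite measures, provided the hypothesis of \cite{Sadel2019} that $H$ has no finitely supported eigenfunction is met.

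The second step is to verify that hypothesis, which is a one-line boundary argument using that the off-diagonal coupling of $H$ is $-I$, hence invertible: if $\Psi$ solved $H\Psi=z\Psi$ and vanished beyond some level $N$ with $\Psi_N\neq 0$, evaluating the recursion at the slice $n=N+1$ (where $\Psi_{N+1}=\Psi_{N+2}=0$) would force $\Psi_N=0$, a contradiction. Hence the averaging formula applies with no extra terms.

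The third step is to pass from the first form of the limit to the second by the algebraic identity \eqref{eq-form-T_0}. Writing a generic $T^\lambda\in\TT^\lambda_0$ as $T^\lambda = T^\lambda_0\smat{\vx+\vv & \vw \\ \nul & \vx}$ with $\vv,\vw\in\KK$, one has $T^\lambda\smat{1\\0} = T^\lambda_0\smat{\vx+\vv\\ \nul}$, and since $T^\lambda_{0,n}=T^\lambda_{1,n}\,T^\lambda_0$ the vector $T^\lambda_{1,n}T^\lambda\smat{1\\0}$ equals $T^\lambda_{0,n}\smat{\vx+\vv\\ \nul}$, which is visibly independent of $\vw$. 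Thus minimization over $\TT^\lambda_0$ collapses to minimization over $\vv\in\KK$, which yields the second expression.

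The main obstacle is conceptual rather than computational: all of the real work --- the Carmona--Lacroix-type spectral averaging identity on the strip, together with the Weyl-disk / $m$-function analysis underlying it --- is done in \cite{Sadel2019}, so the task here is mainly to match notation and hypotheses and to record the two equivalent forms that are used later (Section~\ref{sec:3-key-est} and Section~\ref{sec:3-channels} work with the $T^\lambda_{0,n}\smat{\vx+\vv\\ 0}$ form). Were one to insist on a self-contained proof, the genuine difficulty would lie in re-deriving that averaging formula from scratch.
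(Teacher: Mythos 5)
Your proposal matches the paper's own treatment: Proposition~\ref{th-spec-av-strip} is presented there as a direct specialization of \cite[Theorem~1]{Sadel2019}, invoking exactly the hypothesis that $H$ has no finitely supported eigenfunctions, with the second equality following from the identity \eqref{eq-form-T_0} just as you compute. Your explicit check of that hypothesis (via the invertible off-diagonal coupling $-I$) and of the collapse of the minimization over $\TT^\lambda_0$ to a minimization over $\vv\in\KK$ (since the first column of $T^\lambda\smat{1\\0}$ does not involve $\vw$) is correct and adds detail the paper leaves implicit.
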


Using the symplectic structure of the transfer matrices and the Banach-Alaoglu theorem  one can obtain a criterion for absolute continuity (see \cite{Sadel2019}).

\begin{proposition}\label{th-ac-criterion}
If one finds $\vec{u}_{\lambda,n} \in \CC^m$ for $\lambda \in (a,b)$, $n \in \NN$, such that
$$
\liminf_{n \to \infty} \;\int_a^b\, \left\|\,T^\lambda_{0,n} \smat{\vec{u}_{\lambda,n} \\ \vx}\,\right\|^4\;{\rm d}\lambda\,<\,\infty
$$
then, the measure $\mu_{\vx}$ is absolutely continuous in the interval $(a,b)$.
\end{proposition}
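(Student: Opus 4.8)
\emph{Proof proposal.}
The plan is to combine the spectral averaging formula, Proposition~\ref{th-spec-av-strip}, with the symplectic structure of the transfer matrices and a weak--compactness argument in $L^2(a,b)$. First I would record the symplectic identity: putting $J=\smat{0 & -I_l \\ I_l & \nul}$, for real $\lambda$ each transfer matrix $T^\lambda_n=\smat{B_n-\lambda I & -I \\ I & \nul}$ satisfies $(T^\lambda_n)^* J\, T^\lambda_n=J$ (a one-line block computation using $B_n=B_n^*$), hence so does every product $T^\lambda_{0,n}$; therefore the sesquilinear form $[v,w]:=v^*Jw$ is invariant under $T^\lambda_{0,n}$, while $|[v,w]|\le\|v\|\,\|w\|$ since $J$ is unitary. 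Next, using $\vx^*\vx=1$ and $\vv\in\KK=\ker(\vx^*)$, one computes
\[
\Big[\,\smat{\vx+\vv\\ \nul}\,,\,\smat{\vec{u}_{\lambda,n}\\ \vx}\,\Big]\,=\,-(\vx+\vv)^*\vx\,=\,-1
\]
for every $\vv\in\KK$. Applying the invariant form to the images under $T^\lambda_{0,n}$ gives $1\le\big\|T^\lambda_{0,n}\smat{\vx+\vv\\ \nul}\big\|\,\big\|T^\lambda_{0,n}\smat{\vec{u}_{\lambda,n}\\ \vx}\big\|$, and since this holds for all $\vv\in\KK$ I obtain the crucial lower bound
\[
\min_{\vv\in\KK}\Big\|T^\lambda_{0,n}\smat{\vx+\vv\\ \nul}\Big\|^2\,\ge\,\Big\|T^\lambda_{0,n}\smat{\vec{u}_{\lambda,n}\\ \vx}\Big\|^{-2}.
\]

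Consequently the approximating densities $g_n(\lambda):=\pi^{-1}\big(\min_{\vv\in\KK}\|T^\lambda_{0,n}\smat{\vx+\vv\\ \nul}\|^2\big)^{-1}$ from Proposition~\ref{th-spec-av-strip} obey $0\le g_n(\lambda)\le \pi^{-1}\|T^\lambda_{0,n}\smat{\vec{u}_{\lambda,n}\\ \vx}\|^2$ on $(a,b)$, whence $\int_a^b g_n^2\,{\rm d}\lambda\le\pi^{-2}\int_a^b\|T^\lambda_{0,n}\smat{\vec{u}_{\lambda,n}\\ \vx}\|^4\,{\rm d}\lambda$. By hypothesis the right-hand side has finite $\liminf$ in $n$; I would pass to a subsequence realizing it, along which $(g_n)$ is bounded in $L^2(a,b)$, and then invoke Banach--Alaoglu to extract a further subsequence with $g_n\rightharpoonup g$ weakly in $L^2(a,b)$, $g\ge0$. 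On the other hand Proposition~\ref{th-spec-av-strip} gives that the finite measures $g_n\,{\rm d}\lambda$ converge weakly to $\mu_{\vx}$, so $\int\varphi\,g_n\,{\rm d}\lambda\to\int\varphi\,{\rm d}\mu_{\vx}$ for every $\varphi\in C_c(a,b)$; comparing this with the $L^2$--weak limit along the common subsequence yields $\int\varphi\,{\rm d}\mu_{\vx}=\int\varphi\,g\,{\rm d}\lambda$ for all such $\varphi$, i.e.\ $\mu_{\vx}|_{(a,b)}=g\,{\rm d}\lambda$ is absolutely continuous.

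The genuinely conceptual input is the Wronskian/symplectic lower bound in the first paragraph; everything else is mechanical. The routine verifications are the $J$-unitarity of $T^\lambda_n$ for real $\lambda$, the fact that each $g_n$ is a bona fide nonnegative measurable density (which holds because $T^\lambda_{0,n}$ is invertible for every $\lambda$ and $\vx+\vv\neq\nul$ for $\vv\in\KK$), and the existence of the weak limits. The only step requiring care is the matching of the two limits: one must ensure that the $L^2$--weak limit $g$ of the approximating densities really represents $\mu_{\vx}|_{(a,b)}$. This is precisely where the fourth power in the hypothesis is essential — it makes $(g_n)$ bounded in $L^2(a,b)$ rather than merely in $L^1$, which rules out the concentration / escape of mass that would otherwise obstruct identifying the limiting measure as absolutely continuous.
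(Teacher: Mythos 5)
Your proposal is correct and follows essentially the same route as the paper's proof: the symplectic invariance $(T^\lambda_{0,n})^*J\,T^\lambda_{0,n}=J$ combined with Cauchy--Schwarz gives the pointwise bound $\pi^2 g_n(\lambda)^2\le\|T^\lambda_{0,n}\smat{\vec u_{\lambda,n}\\ \vx}\|^4$ on the approximating densities from Proposition~\ref{th-spec-av-strip}, and then Banach--Alaoglu in $L^2(a,b)$ identifies the weak limit of $g_n\,{\rm d}\lambda$, i.e.\ $\mu_{\vx}|_{(a,b)}$, with an $L^2$ density. The only cosmetic difference is that you derive the lower bound uniformly over all $\vv\in\KK$ rather than at the specific minimizer $\vec v_{\lambda,n}$, which is equally valid.
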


\begin{proof}
First, in \cite{Sadel2019} it was shown that the minimum $\min\limits_{\vec v \in \KK} \left\|T^\lambda_{0,n}  \smat{\vx+\vec v \\ 0} \right\|$ is achieved at a very specific vector which we call $\vec{v}_{\lambda,n}\in \KK$.
Defining
$$
f_n(\lambda):= \pi^{-1} \left\|T^\lambda_{0,n}  \smat{\vx+\vec v_{\lambda,n} \\ 0} \right\|^{-2}
$$
we see from Theorem~\ref{th-spec-av-strip} that $\mu_{\vx}$ is the weak limit of $f_n(\lambda) {\rm d}\lambda$ in the interval $(a,b)$.
Note that
\begin{align*}
&\left( T^\lambda_{0,n} \smat{\vec{u}_{\lambda,n} \\ \vx} \right)^* \smat{\nul & -I \\ I & \nul } T^\lambda_{0,n} \smat{\vx + \vec v_{\lambda,n} \\ 0} \,= \,\\
&\qquad =\, \smat{\vec{u}^*_{\lambda,n} & \vx\,^*}\, \big(T^\lambda_{0,n} \big)^*  \smat{\nul & -I \\ I & \nul} T^\lambda_{0,n}
\smat{\vx + \vec v_{\lambda,n} \\ 0} \,=\,\\
&\qquad =\, \smat{\vec{u}^*_{\lambda,n} & \vx\,^*} \smat{\nul & -I \\ I & \nul} 
\smat{\vx + \vec v_{\lambda,n} \\ 0} \, =\, \smat{\vec{u}^*_{\lambda,n} & \vx\,^*} \smat{0 \\ \vx + \vec v_{\lambda,n}}\,=\,1
\end{align*}
where we use $\|\vx\|=1$ and $\vx\,^* \vec v_{\lambda,n}=0$ as $\vec v_{\lambda,n} \in \KK$.
Now, using the Cauchy Schwartz inequality, this gives
$$
1\,\leq\, \left\|\,T^\lambda_{0,n} \smat{\vec{u}_{\lambda,n} \\ \vx}\,\right\|\; \cdot \;
\left\|T^\lambda_{0,n}  \smat{\vx+\vec v_{\lambda,n} \\ 0} \right\|
$$
and hence
$$
\pi^2 |f_n(\lambda)|^2\,=\,\frac{1}{\left\|T^\lambda_{0,n}  \smat{\vx+\vec v_{\lambda,n} \\ 0} \right\|^4}\,\leq\,
\left\|\,T^\lambda_{0,n} \smat{\vec{u}_{\lambda,n} \\ \vx}\,\right\|^4\;.
$$
Thus, the estimate given implies that
$$
\liminf_{n\to\infty} \int_a^b |f_n(\lambda)|^2\; {\rm d}\lambda\,<\, \infty\;.
$$
This means, along a suitable sub-sequence, the norm of $f_n$ in $L^2(a,b)$ is bounded.
By Banach-Alaaoglu, there is a sub-sequence ( o better, a sub-sub-sequence of the suitable sub-sequence) $f_{n_k}$ which converges weakly
in $L^2(a,b)$ to a limit $f \in L^2(a,b)$.
Noting that bounded continuous functions $g \in C_b(a,b)$ are also in $L^2(a,b)$ one has
$$
\lim_{k \to \infty} \int_a^b g(\lambda)\; f_{n_k}(\lambda)\,{\rm d}\lambda\,=\,
\int_a^b g(\lambda) \,f(\lambda)\,{\rm d}\lambda\;.
$$
for all $g\in C_b(a,b)$. But since $f_{n_k}(\lambda){\rm d}\lambda$ converges weakly to the measure $\mu_{\vx}$ this means that in the interval $(a,b)$ we have
$$
{\rm d}\mu_{\vx}(\lambda)\,=\,f(\lambda)\,{\rm d}\lambda\;
$$
which is an absolutely continuous measure in $(a,b)$ with a density in $L^2(a,b)$.
\end{proof}

\end{document}